\newcommand{\tr}{\text{tr} \,}
\newcommand{\R}{{\mathbb R}}
\newcommand{\ra}{{\: \rightarrow \:}}
\newcommand{\Div}{{\text{div}}}
\newtheorem{thm}{Theorem}[section]
\newtheorem{prop}[thm]{Proposition}
\newtheorem{lem}[thm]{Lemma}
\newtheorem{defn}[thm]{Definition}%[chapter]
\begin{document}

%\renewcommand\bibname{References} %renames bibliography to 'References'

%\clearpage
%\pagenumbering{arabic}
%\setcounter{page}{1}

\title{Maximal Hypersurfaces in Spacetimes with Translational Symmetry}
\author{Andrew Bulawa}
\maketitle

\begin{abstract}
We consider four-dimensional vacuum spacetimes which admit a nonvanishing spacelike Killing field.  The quotient with respect to the Killing action is a three-dimensional quotient spacetime $(M,g)$.  We establish several results regarding maximal hypersurfaces (spacelike hypersurfaces of zero mean curvature) in such quotient spacetimes.  First, we show that a complete noncompact maximal hypersurface must either be a cylinder $S^1 \times \R$ with flat metric or else conformal to the Euclidean plane $\R^2$.  Second, we establish positivity of mass for certain maximal hypersurfaces, referring to a analogue of ADM mass adapted for the quotient setting.  Finally, while lapse functions corresponding to the maximal hypersurface gauge are necessarily bounded in the four-dimensional asymptotically Euclidean setting, we show that nontrivial quotient spacetimes admit the maximal hypersurface gauge only with unbounded lapse.
\end{abstract}

\section{Introduction}

Consider a four-dimensional manifold $\hat M$ with smooth Lorentzian metric $\hat g$ of signature $(-,+,+,+)$.  Suppose $(\hat M,\hat g)$ admits a smooth nonvanishing spacelike Killing field $X$, so that the quotient with respect to the Killing action is a smooth three-dimensional Lorentzian manifold $(M,g)$.  It is well known (see \cite{rG71}) that the vacuum Einstein equations $\hat Ric = 0$ for $\hat g$ can be written in the following equivalent form with respect to the quotient metric $g$:
\begin{equation}
\label{reve}
\left.
\begin{aligned}
	Ric &= D d \psi + d\psi \otimes d\psi + \frac{1}{2} e^{-4\psi}\left( \omega \otimes \omega - |\omega|_g^2 \cdot g \right) \qquad \\
	\Box_g \psi + |d\psi|_g^2 &= -\frac{1}{2} e^{-4\psi} |\omega|_g^2 \\
	\Div (\omega) &= 3 \langle \omega, d \psi \rangle
\end{aligned}\right\rbrace
\end{equation}
$e^\psi$ denotes the length of $X$ and $\omega$ denotes the twist one-form characterizing the integrability of the horizontal distribution of the bundle $\hat M \ra M$ (see Section \ref{quotientsec} for details).  $D$ denotes the covariant derivative given by $g$, $\Box_g \psi$ denotes the $g$-trace of the Hessian $Dd\psi$, and $\langle \, , \, \rangle$ denotes the pointwise defined inner product $g$ induces on one-forms.  We call any smooth three-dimensional Lorentzian manifold $(M,g)$ satisfying \eqref{reve} a \emph{quotient vacuum spacetime}.

The focus of this work is to better understand properties of maximal hypersurfaces in quotient vacuum spacetimes.  This is motivated by the results of \cite{AV94}, which laid out a Hamiltonian framework and explored the concept of ADM mass in the quotient setting; also the follow-up paper \cite{mV95}, which explored foliations by maximal hypersurfaces.  The maximal hypersurface assumption is of well-known interest primarily because it decouples the Hamiltonian and momentum constraint equations in the initial value formulation of general relativity (see \cite{C-BY80}).

Recall that the ADM formalism in the four-dimensional setting gives rise to the ADM mass when one considers lapse functions which asymptotically approach 1 at spacelike infinity on an asymptotically Euclidean spacelike hypersurface (see \cite{rW84} for details).  Moreover, the maximal hypersurface gauge enforces the condition $\Delta u \sim 0$ for the lapse function $u$ near spacelike infinity in a vacuum spacetime (or one satifsying an appropriate energy condition), so $u$ must be bounded on each hypersurface.  The situation is different in quotient spacetimes.  In that situation, lapse functions which are unbounded give rise to the analogue of ADM mass, and, in fact, lapse functions which exhibit $\log$-like growth are a required feature of foliations by asymptotically flat \emph{maximal} hypersurfaces \cite{AV94}.  These disparities between the four- and three-dimensional settings encourage us to explore in more detail foliations of quotient vacuum spacetimes by maximal hypersurfaces and the behavior of their lapse functions.  We do so without imposing an asymptotic flatness assumption.

The first result of this work addresses topological and geometric restrictions the maximality condition imposes on a single hypersurface, with no foliation necessarily in mind.  Let $\Sigma$ be a noncompact orientable spacelike hypersurface of $(M,g)$.  Assume $\Sigma$ is geodesically complete with respect to the hypersurface metric $h$ induced by $g$.  Note that maximality with respect to $\hat g$ means maximality of $\hat\Sigma = \pi^{-1}(\Sigma)$ as a hypersurface of $(\hat M,\hat g)$, where $\pi:\hat M \ra M$ denotes the quotient map corresponding to the Killing action.

\begin{thm}
\label{topgeo}
If $\Sigma$ is maximal with respect to $\hat g$, $g$, or the conformally rescaled metric $\tilde g = e^{2\psi}g$, then one of the following two statements holds:
\begin{itemize}
	\item $(\Sigma, h)$ is conformally equivalent to the Euclidean plane.  %$\hat h = e^{2\psi} (dz + \omega)^2 + e^{2\gamma} (dr^2 + r^2 d\theta^2)$
	\item $(\Sigma, h)$ is a cylinder $S^1 \times \R$ with flat metric.% and $\nabla \psi$ and $\omega(\nu)$ vanish identically on $\Sigma$.
\end{itemize}
\end{thm}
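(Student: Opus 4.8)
The plan is to reduce the statement to a single curvature inequality---that the Gaussian curvature of the induced metric in the \emph{Einstein frame} $\tilde g=e^{2\psi}g$ is nonnegative---and then to invoke the classification of complete noncompact surfaces of nonnegative curvature. I would begin with the Gauss equation. Writing $\nu$ for the unit timelike normal and $k$ for the second fundamental form, the Gauss equation for the spacelike surface $\Sigma$ in the three-dimensional ambient space reads $K=\sigma-\det k$, where $\sigma$ is the ambient sectional curvature of $T\Sigma$; since maximality ($\tr k=0$) forces $\det k\le 0$, this already yields $K=\sigma+\tfrac12|k|^2\ge\sigma$. In three dimensions the Weyl tensor vanishes, so for an orthonormal frame $e_1,e_2,\nu$ one computes $\sigma=G(\nu,\nu)=\tfrac12\bigl(Ric(e_1,e_1)+Ric(e_2,e_2)+Ric(\nu,\nu)\bigr)$, and everything comes down to controlling the sign of these Ricci terms through \eqref{reve}.

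The key observation is that in the Einstein frame the system \eqref{reve} takes the form $\tilde{Ric}=2\,d\psi\otimes d\psi+\tfrac12 e^{-4\psi}\,\tilde\omega\otimes\tilde\omega$, a sum of rank-one nonnegative terms, so that $\tilde{Ric}(v,v)\ge 0$ for \emph{every} vector $v$, timelike ones included. Hence $\sigma_{\tilde g}=\tilde G(\nu,\nu)\ge 0$, and for a surface maximal with respect to $\tilde g$ we obtain $\tilde K\ge 0$ directly, $\tilde K$ denoting the Gaussian curvature of the induced metric $\tilde h=e^{2\psi}h$. For the remaining two gauges I would reduce to this same inequality. In the $g$-maximal case I would use the conformal transformation law $\tilde K=e^{-2\psi}\bigl(K-\Delta_h\psi\bigr)$ together with the scalar equation $\Box_g\psi+|d\psi|_g^2=-\tfrac12 e^{-4\psi}|\omega|_g^2$ to eliminate the indefinite Hessian term $Dd\psi(\nu,\nu)$; the combination $K-\Delta_h\psi$ then reorganizes into a manifestly nonnegative sum of $|d\psi|_h^2$, a nonnegative multiple of $e^{-4\psi}$ times squared components of $\omega$, and $\tfrac12|k|^2$. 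In the $\hat g$-maximal case I would instead work upstairs: a maximal spacelike hypersurface in the four-dimensional \emph{vacuum} spacetime satisfies the Hamiltonian constraint $\hat R^{(3)}=|\hat k|^2\ge 0$, and the Kaluza--Klein reduction of the scalar curvature of $\hat\Sigma=\pi^{-1}(\Sigma)$ along the Killing fiber produces exactly the same nonnegative expression for $\tilde K$.

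With $\tilde K\ge 0$ in hand, I would pass to the universal cover and apply uniformization: a complete simply connected surface of nonnegative Gaussian curvature is conformally $\R^2$ (parabolic type), and the only fixed-point-free properly discontinuous groups of conformal automorphisms of $\mathbb C$ are groups of translations, so $\Sigma$ is conformally the plane, the cylinder, or a torus; noncompactness excludes the torus. Since conformal type is a conformal invariant, this dichotomy holds verbatim for $(\Sigma,h)$. In the cylinder case I would upgrade conformal equivalence to isometry: either by Cohn--Vossen ($\chi(\Sigma)=0$ together with $\tilde K\ge 0$ forces $\tilde K\equiv 0$) or by the Cheeger--Gromoll splitting theorem applied to the two ends. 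Saturation of the positivity computation above then forces $d\psi$ to vanish along $\Sigma$, so $\psi$ is constant there, $h$ and $\tilde h$ are homothetic, and $(\Sigma,h)$ is itself a flat $S^1\times\R$.

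I expect two points to require the most care. The first is the uniform positivity $\tilde K\ge 0$: the obstruction is the indefinite term $Dd\psi(\nu,\nu)$, and the three maximality hypotheses differ precisely in how the normal derivative $\nu(\psi)$ enters, so each must be massaged separately into the same nonnegative form using \eqref{reve}. The second, and I think the genuine obstacle, is reconciling the completeness hypothesis with the frame in which curvature is controlled: completeness is assumed for $h$, whereas nonnegativity is natural for $\tilde h=e^{2\psi}h$, and completeness is not conformally invariant. I would address this either by extracting from \eqref{reve} and the nonvanishing of the Killing length $e^\psi$ a lower bound on $\psi$ sufficient to transfer completeness from $h$ to $\tilde h$, or by running the conformal-type classification through a parabolicity criterion that uses only $h$-completeness together with the curvature sign of the conformal metric.
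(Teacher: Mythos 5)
Your curvature computation is essentially sound and runs parallel to the paper's Lemma \ref{s}: under each of the three maximality hypotheses the relevant positivity can indeed be extracted from \eqref{reve} (the paper phrases it as $\hat s \geq 0$ for the three-dimensional scalar curvature of $(\hat\Sigma,\hat h)$, which is equivalent, via $e^{2\psi}\tilde s = s - 2\Delta_h\psi$, to the nonnegativity of $\tilde K$ that you derive). The problem is the step you yourself flag as ``the genuine obstacle,'' and neither of your proposed repairs closes it. The classification of complete noncompact surfaces with $\tilde K \geq 0$ (uniformization via Cheng--Yau/Huber, Cohn--Vossen, or Cheeger--Gromoll) all require completeness of $\tilde h = e^{2\psi}h$, and the paper explicitly warns that this is not guaranteed. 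Your first fix --- a lower bound on $\psi$ from the nonvanishing of the Killing field --- does not exist: $X$ nonvanishing only gives $e^\psi > 0$ pointwise, and $\psi \to -\infty$ at infinity is perfectly consistent with \eqref{reve} (this is exactly the regime the paper cares about, since $\psi$ is expected to have logarithmic behavior). Your second fix --- a parabolicity criterion using $h$-completeness plus the curvature sign of the incomplete conformal metric --- is not a theorem in any form you could invoke: the flat unit disc has $K \equiv 0$ and is conformally hyperbolic, so a curvature sign on an incomplete conformal representative says nothing about conformal type. Since the entire dichotomy (plane vs.\ cylinder, and the exclusion of the hyperbolic type) rests on this step, the argument as proposed does not go through.

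What the paper does instead is worth noting, because it sidesteps the conformal frame entirely at the decisive moment. It proves $\hat s \geq 0$ on the three-manifold $\hat\Sigma$ and then invokes Anderson's result (Proposition \ref{andprop}) for $S^1$-quotients of complete $3$-manifolds of nonnegative scalar curvature to get the linear circumference bound $\lambda(r) \leq c\,r$ for geodesic circles of $(\Sigma,h)$ itself. This gives quadratic area growth in the metric $h$, for which completeness \emph{is} assumed, and Cheng--Yau's volume-growth parabolicity criterion (Theorem \ref{Liouville}) then fixes the conformal type with no reference to $\tilde h$. The flat-cylinder refinement likewise does not come from Cohn--Vossen: it comes from the Gauss--Bonnet monotonicity argument of Lemma \ref{G}, where $\chi(D(r)) \leq 0$ for large $r$ forces the nonnegative nondecreasing quantity $G$ to vanish identically, which kills $\hat s$, $\nabla\psi$, and $\omega(\nu)$ and hence makes $h$ itself flat. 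If you want to salvage your route, you would need to replace the appeal to completeness of $\tilde h$ by some such volume- or length-growth estimate in the $h$-metric; at that point you have essentially reconstructed the paper's proof.
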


In addition to the metric $g$, we consider the metrics $\hat g$ and $\tilde g$ because of the direct role the former plays in our framework and because of the significant simplification the latter grants to the system \eqref{reve}.  See, for example, equation \eqref{cric}.  Besides, it is not clear which metric is the most natural or useful once one moves to the quotient setting.

Positive mass theorems for quotient vacuum spacetimes were established in \cite{AV94} and \cite{BCM95}.  The next theorem, Theorem \ref{pmt}, extends these results to the situation in which the asymptotic flatness assumption is replaced by the maximal hypersurface assumption.  The fact that one can make this trade-off together with the conclusions of Theorem \ref{topgeo} illustrate that asymptotic flatness and maximality are intertwined to some extent.  This may not be surprising given the restrictions inherent in working in three spacetime dimensions, but we find it interesting to ask to what extent these conditions are related.  Proposition \ref{areaprop} touches on this point slightly, showing that the circumferences of large geodesic discs on maximal hypersurfaces cannot grow significantly greater than those in Euclidean space.

The ADM mass of an asymptotically flat hypersurface is adapted for quotient spacetimes in \cite{AV94}.  Instead of using Euclidean space as the model for asymptotic flatness for a hypersurface metric (two-dimensional in the quotient setting), the authors find a more appropriate model to be the flat metric induced on a cone in Euclidean space, given in standard polar coordinates $(r,\theta)$ as $r^{-\beta}(dr^2+r^2d\theta^2)$.  The constant $\beta$ is shown to be a constant multiple of the ADM mass, and this leads to a positive mass theorem: $\beta \geq 0$ with equality if and only if the hypersurface metric and the second fundamental form comprise initial data for Minkowski spacetime.  The authors find it necessary to require that $\beta \leq 2$ to ensure that the Hamiltonian framework remains well defined.  They point out that this upper bound contrasts the behavior of the ADM mass in four-dimensional spacetimes.

An application of the Gauss-Bonnet theorem shows that $\beta$ is directly related to the total scalar curvature of the hypersurface with respect to which it is defined.  In Section \ref{topsec} we use this fact to define $\beta$ for maximal hypersurfaces in a way which does not require asymptotic flatness but that reduces to the definition in \cite{AV94} when asymptotic flatness is assumed.  This definition is also analogous to the quantity $\theta_0$ in \cite{BCM95} where cylindrical symmetry is assumed.  Let $\hat h$ and $\hat k$ denote the metric and second fundamental form, respectively, that $\hat g$ induces on the hypersurface $\hat\Sigma = \pi^{-1}(\Sigma)$.  We prove the following:

\begin{thm}
\label{pmt}
Suppose that $\hat\Sigma$ is maximal with respect to $\hat g$.  Then 
\[
0 \leq \beta \leq 2,
\]
and $\beta = 0$ if and only if $\hat h$ is flat and $\hat k$ is identically zero.
\end{thm}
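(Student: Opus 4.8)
The plan is to reduce both inequalities to statements about the total Gaussian curvature of the two-dimensional quotient $(\Sigma,\tilde h)$, where $\tilde h = e^{2\psi}h$ is the metric that $\tilde g$ induces on $\Sigma$. Because $\beta$ is defined in Section \ref{topsec} through the Gauss--Bonnet theorem, so that (up to normalization) $\pi\beta = \int_\Sigma K_{\tilde h}\,dA_{\tilde h}$, the upper bound will follow from Theorem \ref{topgeo}: that result gives $\chi(\Sigma)\le 1$ (either $\Sigma$ is conformal to $\R^2$, with $\chi = 1$, or a flat cylinder, with $\chi = 0$), so once the integrand is known to be nonnegative the Cohn--Vossen inequality yields $\int_\Sigma K_{\tilde h}\,dA_{\tilde h}\le 2\pi\chi(\Sigma)\le 2\pi$, i.e. $\beta\le 2$. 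The real content is therefore the lower bound $\beta\ge 0$ together with the rigidity statement.

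For the lower bound I would first use that $\hat\Sigma$ sits in the genuinely vacuum spacetime $(\hat M,\hat g)$. The Hamiltonian (Gauss) constraint for the spacelike hypersurface $\hat\Sigma$ reads $R^{\hat\Sigma} + (\tr_{\hat h}\hat k)^2 - |\hat k|^2_{\hat h} = 2\,\hat G(\hat n,\hat n)$, and since $\hat g$ is Ricci-flat the right-hand side vanishes. The maximality hypothesis $\tr_{\hat h}\hat k = 0$ then gives the clean identity
\[
R^{\hat\Sigma} = |\hat k|^2_{\hat h} \ge 0 .
\]
This is the one place where working with $\hat g$ rather than $g$ or $\tilde g$ is essential, and it is what forces Theorem \ref{pmt} to be phrased for $\hat g$-maximality: the three-dimensional quotient is not vacuum, so the analogous constraint for $\Sigma$ in $(M,g)$ or $(M,\tilde g)$ carries indefinite curvature terms.

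Next I would push this positivity down to $\Sigma$. Since $\hat\Sigma$ carries the free isometric action of $X$ with fibre length $e^\psi$, it is a circle (or line) bundle over $(\Sigma,h)$, and the Kaluza--Klein reduction of its scalar curvature reads $R^{\hat\Sigma} = 2K_h - 2\Delta_h\psi - 2|\nabla_h\psi|^2 - \tfrac14 e^{2\psi}|F|^2_h$, with $F$ the curvature of the induced bundle connection. The decisive move is the conformal change to $\tilde h = e^{2\psi}h$: in two dimensions $K_{\tilde h} = e^{-2\psi}(K_h - \Delta_h\psi)$, so the two offending terms combine and
\[
K_{\tilde h} = e^{-2\psi}\Bigl(\tfrac12 R^{\hat\Sigma} + |\nabla_h\psi|^2 + \tfrac18 e^{2\psi}|F|^2_h\Bigr) \ge 0
\]
pointwise. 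Integrating gives $\beta\ge 0$ with no boundary term to control, and the rigidity is read off from the three nonnegative summands: $\beta = 0$ forces $R^{\hat\Sigma}\equiv 0$ (hence $\hat k\equiv 0$ by the constraint), $\nabla_h\psi\equiv 0$, and $F\equiv 0$; these in turn make $\tilde h$ flat and exhibit $\hat h$ as a flat product, which is exactly ``$\hat h$ flat and $\hat k\equiv 0$,'' while the converse is immediate.

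The step I expect to cause the most trouble is not the inequality itself but the bookkeeping that legitimizes the identification $\pi\beta = \int_\Sigma K_{\tilde h}\,dA_{\tilde h}$. One must fix the signs in the Hamiltonian constraint and in the Kaluza--Klein formula for the Lorentzian ambient metric (the normal $\hat n$ is timelike), check that $(\Sigma,\tilde h)$ is complete so that Cohn--Vossen applies, and confirm that the conformally rescaled representative $\tilde h$ is indeed the metric against which $\beta$ is defined in Section \ref{topsec}. The point of using $\tilde h$ is precisely that it replaces the divergence term $\int_\Sigma\Delta_h\psi\,dA_h$ — which cannot simply be discarded once asymptotic flatness is dropped — by the manifestly sign-definite, convergent expression above. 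If instead $\beta$ is normalized through $h$, the same computation shows the missing piece is exactly $\int_\Sigma\Delta_h\psi\,dA_h$, whose vanishing would then have to be argued from completeness and the asymptotic cone structure supplied by Theorem \ref{topgeo}.
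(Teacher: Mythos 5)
Your lower bound and rigidity arguments essentially coincide with the paper's: Lemma \ref{s} is exactly the Gauss-constraint computation giving $\hat s = |\hat k|_{\hat h}^2 \geq 0$ on a $\hat g$-maximal slice, and the paper's $\beta$ is by definition $\frac{1}{2\pi}\int_\Sigma \bigl(\hat s + 2|\nabla\psi|^2 + \tfrac12 e^{-4\psi}\omega(\nu)^2\bigr)\,d\mu = \frac{1}{2\pi}\int_\Sigma \tilde s\, d\tilde\mu$, a manifestly nonnegative integrand, so $\beta \geq 0$ and the case of equality forces each summand to vanish pointwise. That part of your proposal is sound (your rigidity discussion is in fact slightly more explicit than the paper's about why $\hat h$ is flat).

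The genuine gap is in the upper bound $\beta \leq 2$, which is the real content of the theorem. You propose to apply the Cohn--Vossen inequality to $(\Sigma,\tilde h)$, whose curvature is nonnegative; but Cohn--Vossen requires geodesic completeness, and the paper explicitly warns in Section \ref{quotientsec} that completeness of $\Sigma$ with respect to $\tilde h = e^{2\psi}h$ is \emph{not} guaranteed (only completeness with respect to $h$ is assumed, and no lower bound on $\psi$ is available to transfer it). An incomplete nonnegatively curved surface can have total curvature exceeding $2\pi\chi$ (a large spherical cap already does), so the inequality cannot be invoked. Your fallback --- working with $h$ and arguing that $\int_\Sigma \Delta_h\psi\,d\mu$ vanishes from ``the asymptotic cone structure'' --- is also not available: that integral need not converge or vanish, and nothing in Theorem \ref{topgeo} supplies asymptotic flatness. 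The paper's actual route is Lemma \ref{G}: working on the complete surface $(\Sigma,h)$, it bounds $-\int_{D(r)}\Delta_h\psi\,d\mu$ one-sidedly by $[\lambda(r)G'(r)]^{1/2}$ via a Cheng--Yau cutoff and Cauchy--Schwarz argument, uses Anderson's linear circumference growth (Proposition \ref{andprop}) together with the monotonicity and boundedness of $G$ to show $\int^r[\lambda G']^{1/2}\,dz = o(r)$, and then integrates the Gauss--Bonnet inequality \eqref{Gineq} to conclude $G(\infty) \leq 4\pi$, i.e.\ $\beta \leq 2$. Some argument of this kind, controlling the divergence term on the complete but non-sign-definite side rather than assuming completeness on the sign-definite side, is what your proposal is missing.
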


It does not appear to be as straightforward to establish this theorem when $\Sigma$ is instead maximal with respect to $g$ or $\tilde g$.  These cases are not addressed in this paper.

The final result of this paper concerns the maximal hypersurface gauge condition on quotient vacuum spacetimes.  That is, one assumes $M = \Sigma \times \R$ with each $\Sigma_t = \Sigma \times \{t\}$ a maximal spacelike hypersurface.  The lapse function is given by $u=(-|dt|^2)^{-1/2}$, where $t$ is projection $\Sigma \times \R \ra \R$ onto the $\R$-factor and the norm $|\cdot |$ is taken with respect to whichever metric, $\hat g$, $g$, or $\tilde g$, one has in mind.  Theorem \ref{maxgaugethm} illustrates that, in contrast to the four-dimensional asymptotically Euclidean setting, the maximal hypersurface gauge in nontrivial quotient spacetimes is not compatible with a bounded lapse.  This was observed in \cite{mV95} under the additional assumption of asymptotic flatness.

\begin{thm}
\label{maxgaugethm}
Suppose $(\hat M,\hat g)$ gives rise to a quotient vacuum spacetime $(\Sigma \times \R,g)$ on which $\omega$ vanishes identically.  Suppose each $\Sigma_t = \Sigma \times \{t\}$ is a maximal hypersurface that is noncompact, orientable, and complete with respect to the hypersurface metric induced by $g$.  If, on each $\Sigma_t$, the lapse is bounded above and $\psi$ is bounded above and below, then $\hat g$ is flat.
\end{thm}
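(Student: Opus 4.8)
The plan is to exploit the fact that, since $\hat\Sigma_t=\pi^{-1}(\Sigma_t)$ is maximal with respect to $\hat g$, the family $\{\hat\Sigma_t\}$ is a maximal foliation of the \emph{four-dimensional} vacuum spacetime $(\hat M,\hat g)$, so that the lapse obeys the classical maximal-slicing equation, and then to push that equation down to the two-dimensional quotient surface where the dichotomy of Theorem \ref{topgeo} applies. First I would record that, for a maximal foliation of a vacuum spacetime, preservation of the vanishing mean curvature forces the lapse $u$ to satisfy
\[
\hat\Delta u = |\hat k|^2\, u,
\]
where $\hat\Delta$ and $\hat k$ are the Laplacian and second fundamental form induced by $\hat g$ on $\hat\Sigma_t$; here one uses $\widehat{Ric}(\hat\nu,\hat\nu)=0$. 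Because $dt$ annihilates $X$, the lapse computed with $\hat g$, $g$, or $\tilde g$ coincide, and $u$ is constant along the Killing orbits.

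Next I would carry out the symmetry reduction. Since $\omega\equiv 0$ the horizontal distribution is integrable, so in adapted coordinates $\hat g = g + e^{2\psi}\,d\chi^2$ and hence $\hat h = h + e^{2\psi}\,d\chi^2$ on $\hat\Sigma_t$. A direct warped-product computation then gives, for the $X$-invariant function $u$,
\[
\hat\Delta u = \Delta_h u + \langle \nabla_h\psi,\nabla_h u\rangle, \qquad |\hat k|^2 = |k|_h^2 + (\nu(\psi))^2,
\]
where $k$, $\nu$, $\Delta_h$, $\nabla_h$ refer to $(\Sigma_t,h)\subset (M,g)$. Consequently $u$ is a positive function satisfying $L_\psi u\geq 0$ for the weighted Laplacian $L_\psi := e^{-\psi}\,\Div(e^{\psi}\nabla_h\,\cdot\,)$, namely
\[
L_\psi u = \bigl(|k|_h^2 + (\nu(\psi))^2\bigr)\,u \;\geq\; 0 .
\]

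The heart of the argument is a Liouville step on $(\Sigma_t,h)$. By Theorem \ref{topgeo} this surface is either a flat cylinder or conformal to the Euclidean plane, and in either case it is parabolic; since parabolicity is a conformal (hence capacity) invariant in two dimensions, this holds regardless of the conformal factor. As $\psi$ is bounded above and below, the weight $e^{\psi}$ is bounded above and below, so the weighted capacities are comparable to the unweighted ones and $(\Sigma_t,h,e^{\psi}\,dV_h)$ is weighted-parabolic. A bounded-above $L_\psi$-subharmonic function on a weighted-parabolic surface is constant; since $u$ is bounded above by hypothesis, $u$ is constant on each $\Sigma_t$. Feeding this back into the weighted lapse equation and using $u>0$ forces $|k|_h^2 + (\nu(\psi))^2\equiv 0$, hence $k\equiv 0$ and $\nu(\psi)\equiv 0$ on every slice. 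This is precisely where the three-dimensional setting departs from the four-dimensional one: the lift $\hat\Sigma_t$ is modeled on $\R^3$ and is \emph{not} parabolic, so the same bounded lapse imposes no rigidity upstairs, and the reduction to the parabolic quotient surface is what makes the conclusion possible. I expect this transfer of parabolicity through the bounded weight, together with the sign-accurate derivation of the reduced equation, to be the main obstacle.

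It remains to promote the slice-wise vanishing to flatness of $\hat g$. With $k\equiv 0$ the foliation is totally geodesic with geodesic unit normal and $u$ is spatially constant, so after reparametrizing $t$ and passing to the associated product coordinates $(M,g)$ splits as $-dt^2+h$ with $h=h(x)$, while $\nu(\psi)\equiv 0$ gives $\psi=\psi(x)$. The field equations \eqref{reve} with $\omega\equiv 0$ then reduce on $(\Sigma_t,h)$ to $K\,h = Dd\psi + d\psi\otimes d\psi$ and to $\Delta_h\psi + |\nabla_h\psi|_h^2 = 0$, the latter asserting that $e^{\psi}$ is $h$-harmonic. Being bounded and harmonic on the parabolic surface $(\Sigma_t,h)$, $e^{\psi}$ is constant, whence $\psi$ is constant, $K\equiv 0$, and $(\Sigma_t,h)$ is flat. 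Therefore $(M,g)$ is a flat Lorentzian product, and $\hat g = g + e^{2\psi}\,d\chi^2$, a metric product of a flat base with a fiber of constant length, is flat.
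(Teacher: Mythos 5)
Your argument is sound for the case in which each slice is maximal with respect to the four\-/dimensional metric $\hat g$ (i.e.\ $\hat\tau\equiv 0$), and there it is in fact a slightly more direct route than the paper's: you run the Liouville argument for the reduced weighted equation $e^{-\psi}\Div(e^{\psi}\nabla u)=(|k|_h^2+d\psi(\nu)^2)\,u$ on the parabolic quotient surface, whereas the paper compactifies the fibers and applies the Cheng--Yau volume-growth criterion (Theorem~\ref{Liouville}) on the three-manifold $\hat\Sigma_t$; both give $u$ constant, $k\equiv 0$ and $d\psi(\nu)\equiv 0$, and your static reduction then closes that case correctly (the paper instead folds it back into the $\tau\equiv 0$ case). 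The facts you invoke about conformal invariance of parabolicity and its stability under a weight bounded between positive constants are standard, though the paper proves its own Liouville-type lemma from scratch rather than citing capacity arguments.

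The gap is that ``maximal'' in this paper (Section~\ref{topsec}) means that any one of $\hat\tau$, $\tau$, $\tilde\tau$ vanishes, and your proof covers only the first possibility. The case $\tau\equiv 0$ (maximality with respect to the quotient metric $g$) is the heart of the paper's proof, and your strategy breaks down there at the first step: the lapse equation is $\Delta_h u=(|k|_h^2+Ric(\nu,\nu))u$ with $Ric(\nu,\nu)=e^{-\psi}\Delta_h e^{\psi}$, which has no sign, so $u$ is not subharmonic for any weighted Laplacian with nonnegative potential and parabolicity alone yields nothing. The paper's Lemma~\ref{lapselem} handles this by comparing $u$ with $v=e^{\psi}$ (this is where the two-sided bound on $\psi$ really enters), and the conclusion is only that $u$ is a constant multiple of $e^{\psi}$ --- not that $u$ is constant --- while $d\psi(\nu)$ need not vanish. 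Consequently your final ``static product'' step fails in that case: the correct endgame allows $\psi=bt+c$ with $b\neq 0$, the slices are totally geodesic but the spacetime is not a metric product, and flatness of $\hat g=e^{2(bt+c)}(-dt^2+dz^2)+dx^2+dy^2$ is only revealed by a Rindler-type change of coordinates. The case $\tilde\tau\equiv 0$ is likewise unaddressed, although, like your $\hat g$ case, it reduces quickly to $\tau\equiv 0$.
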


Finally, we take a moment to mention the results of Huneau \cite{Hun16,Hun16stab} which solve the constraint equations for quotient vacuum spacetimes and establish the nonlinear stability of Minkowski spacetime in the quotient setting.  The latter result was achieved using generalized wave coordinates rather than the maximal hypersurface gauge explored here.  The argument to solve the constraint equations begins with the assumption that $(\Sigma,h)$ is conformal to the Euclidean plane.  The mean curvature $\tau$ of $(\Sigma,h)$ is allowed to vary (ruling out the maximal hypersurface gauge) in order to guarantee existence of solutions to the Lichnerowicz equation.  We point out, however, that while existence results appear not to be known for the case $\tau \equiv 0$, existence of solutions is not ruled out.

\section{The Quotient Structure}
\label{quotientsec}

Let $(M,g)$ be a quotient vacuum spacetime arising from a smooth vacuum spacetime $(\hat M, \hat g)$, as above.  Throughout the paper we assume $\Sigma \subset M$ is a smooth orientable spacelike hypersurface with induced Riemannian metric $h$ and corresponding connection $\nabla$.  On $\Sigma$ the tangent space $TM$ splits as $T\Sigma \oplus T\Sigma^\perp$, where $T\Sigma^\perp$ denotes the space of tangent vectors normal to $\Sigma$.  We have the second fundamental form $k: T\Sigma \ra T\Sigma$ given by
\[ k(Y) = - D_Y \nu, \]
where $\nu$ is a unit length section of $T\Sigma^\perp$.

$\Sigma$ is the quotient, with respect to the Killing action, of a spacelike hypersurface $\hat\Sigma \subset \hat M$ having metric $\hat h$ and second fundamental form $\hat k$.  The situation is summarized by the following diagram in which horizontal arrows denote embeddings and $\pi_M$ and $\pi_\Sigma$ denote the quotient maps corresponding to the Killing action:
\[
\begin{CD}
(\hat\Sigma,\hat h,\hat k) @>>> (\hat M,\hat g) \\
\pi_{\Sigma}@ VVV \pi_M@ VVV \\
(\Sigma, h, k) @>>> (M, g) \\
\end{CD}
\]
$\hat k$ is defined so that it equals the pullback $\pi_\Sigma^* k$.

\subsection*{$\psi$ and the twist one-form $\omega$}
\label{reduction}

Let $\xi$ denote the one-form dual to the Killing vector field $X$ on $\hat M$.  Define $\hat\omega = \ast \, (\xi \wedge d \xi)$, where $\,\ast\,$ denotes the Hodge star operator.  $\hat\omega$ vanishes identically on $\hat M$ if and only if $X$ is hypersurface orthogonal or, equivalently, the horizontal distribution of the total space $\hat M$ is integrable (see \cite{rW84}).  It is easily verified that $\hat\omega(X)$ and the Lie derivative $\mathcal L_X \hat\omega$ are identically zero.  Therefore $\hat\omega$ decends to a one-form $\omega$, called the \emph{twist one-form}, on $M$.  Quotient spacetimes on which $\omega$ vanishes identically are called \emph{polarized} in the literature.  We let $\psi$ denote the log of the length of the Killing vectorfield $X$:
\[
e^{2\psi} = \hat g(X,X)
\]
 
\subsection*{Relating the Geometry of $\hat\Sigma$ and $\Sigma$}
\label{relations}

It will be useful to establish relationships between quantities in the total space and quantities in the quotient.  The mean curvature $\tau$ of $\Sigma$ and the square norm of $k$ are given by
\[ \tau = \tr_h k \qquad \text{and} \qquad |k|_h^2 = \tr_h (k \circ k), \]
and analogous quantities for $\hat \Sigma$ are given by
\[ \hat \tau = \tr_{\hat h} \hat k \qquad \text{and} \qquad |\hat k|_{\hat h}^2 = \tr_{\hat h} (\hat k \circ \hat k). \]
where $\tr_h$ and $\tr_{\hat h}$ denote traces with respect to the indicated metrics.

$\hat \tau$ and $|\hat k|^2$ are invariant with respect to the Killing action, so they descend to functions on $\Sigma$, and we will view them as such.  In fact, any scalar on $\hat\Sigma$ determined by $\hat g$ or $\hat h$ is invariant with respect to the Killing action and will descend to a scalar on $\Sigma$.  We shall therefore view all scalars on $\hat\Sigma$ in this paper as scalars on $\Sigma$.

The following relationships are established in \cite{sD08}:
\begin{gather}
\label{ktokhat}	|\hat k|_{\hat h}^2 = |k|_h^2 + d\psi(\nu)^2 + \frac{1}{2}e^{-4\psi}\left( |\omega |_g^2 + \omega(\nu)^2 \right) \\
\label{tautotauhat} \hat \tau = \tau - d\psi(\nu)
\end{gather}

\subsection*{The Conformally Rescaled Geometry}
\label{confgeo}

We will also consider the conformally rescaled metrics $\tilde g$ on $M$ and $\tilde h$ on $\Sigma$ given by
\[
\tilde g = e^{2\psi} g \quad \text{and} \quad \tilde h = e^{2\psi} h.
\]
There are two reasons for considering this conformal rescaling.  First, the scalar curvature $\tilde s$ given by $\tilde h$ is always nonegative, placing the geometry of the surface $(\Sigma,\tilde h)$ in a significantly more manageable class.  Secondly, examples of quotient spacetimes such as Einstein-Rosen waves (see, for example, \cite{ABS971,kT65}) simplify considerably when viewed in this conformally rescaled setting.  We point out, however, that geodesic completeness of $\Sigma$ with respect to $\tilde h$ is not guaranteed.

The Ricci curvature $\tilde Ric$ corresponding to $\tilde g$ satisfies
\begin{equation}
	\label{cric}
	\tilde Ric = 2 d\psi \otimes d\psi + \frac{1}{2} e^{-4\psi} \omega \otimes \omega.
\end{equation}
This follows from \eqref{reve} and standard formulas regarding conformal rescalings.

The metric $\tilde g$ induces on $\Sigma$ the second fundamental form $\tilde k$ and the mean curvature
\[ \tilde \tau = \tr_{\tilde h} \tilde k, \]
where throughout the paper the convention is used that a tilde $\, \tilde{} \,$ or a subscript $\tilde g$ indicates that the marked quantity is defined with respect to the conformal metric $\tilde g$.  It is straightforward to derive the following relations:
\begin{gather}
\label{ktoktilde} \tilde k = e^\psi ( k - d\psi(\nu) \cdot Id_{T\Sigma} ) \\
\label{tautotautilde} \tilde \tau = e^{-\psi} ( \tau - 2 d\psi(\nu) ),
\end{gather}
where $Id_{T\Sigma} : T\Sigma \ra T\Sigma$ denotes the identity map.

\section{Topology and Geometry of Maximal Hypersurfaces}
\label{topsec}

In this section we prove Theorem \ref{topgeo}, but we begin with some preliminary steps.  A spacelike hypersurface is said to be \emph{maximal} if its mean curvature is identically zero.  We have discussed three mean curvatures, $\hat\tau$, $\tau$, and $\tilde\tau$, with respect to which the definition of maximality could refer.  We will say that $\Sigma$ is maximal if \emph{any one} of these functions vanishes identically on $\Sigma$.  When it is necessary to be more specific, we will specify that $\Sigma$ is maximal with respect to $\hat g$, $g$, or $\tilde g$ accordingly.  We will assume for the remainder of the paper that $\Sigma$ is noncompact, orientable, and geodesically complete with respect to $h$.

It will be useful for us to relate the scalar curvatures between the total space and the quotient, and additionally, to know that $(\Sigma,\hat h)$ has nonnegative scalar curvature when $\Sigma$ is a maximal hypersurface.

\begin{lem}
\label{s}
Let $\hat s$ and $s$ denote the scalar curvatures corresponding to $\hat h$ and $h$, respectively.  Then
\[
s = \hat s + 2\Delta_h\psi + 2|\nabla\psi|_h^2 + \frac{1}{2} e^{-4\psi} \omega(\nu)^2,
\]
where $\Delta_h = \emph{tr}_h \, \nabla^2$ denotes the Laplacian on $\Sigma$.  Furthermore, if $\Sigma$ is maximal, then $\hat s \geq 0$.
\end{lem}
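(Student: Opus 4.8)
The plan is to deduce both statements from the codimension-one Gauss equation, written in its Hamiltonian-constraint form, applied to each of the two embeddings $\hat\Sigma \hookrightarrow \hat M$ and $\Sigma \hookrightarrow M$, and then to eliminate the ambient curvature using the field equations \eqref{reve} together with the comparison relations \eqref{ktokhat}--\eqref{tautotauhat}. The decisive input is the top embedding: for a spacelike hypersurface with timelike unit normal $\hat\nu$ the Gauss equation reads $\hat s + \hat\tau^2 - |\hat k|_{\hat h}^2 = 2\hat Ric(\hat\nu,\hat\nu) + \hat s_{\hat M}$, where $\hat s_{\hat M}$ is the scalar curvature of $(\hat M,\hat g)$. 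Since $\hat g$ is vacuum, $\hat Ric = 0$ and the right-hand side vanishes, leaving the identity $\hat s = |\hat k|_{\hat h}^2 - \hat\tau^2$. This simultaneously serves as the target for the curvature formula and reduces the nonnegativity claim to a bound on $\hat\tau^2$.

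For the formula itself I would apply the same Gauss equation to $\Sigma \hookrightarrow M$, obtaining $s = s_M + 2 Ric(\nu,\nu) - \tau^2 + |k|_h^2$, where $s_M$ is the scalar curvature of $(M,g)$ and the signs are those appropriate to a spacelike hypersurface with $g(\nu,\nu) = -1$. The quotient curvature is then eliminated through \eqref{reve}: evaluating its first line on $(\nu,\nu)$ gives $Ric(\nu,\nu) = Dd\psi(\nu,\nu) + d\psi(\nu)^2 + \tfrac12 e^{-4\psi}(\omega(\nu)^2 + |\omega|_g^2)$, while taking its $g$-trace and invoking the second line gives $s_M = -\tfrac32 e^{-4\psi}|\omega|_g^2$. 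Next I would pass from ambient to intrinsic derivatives of $\psi$ along $\Sigma$ via the standard decompositions $\Box_g\psi = \Delta_h\psi - Dd\psi(\nu,\nu) + \tau\,d\psi(\nu)$ and $|d\psi|_g^2 = |\nabla\psi|_h^2 - d\psi(\nu)^2$. Substituting all of this, rewriting $\hat s$ through the vacuum identity of the previous paragraph together with \eqref{ktokhat} and \eqref{tautotauhat}, and using the second line of \eqref{reve} once more to replace $\Box_g\psi + |d\psi|_g^2$ by $-\tfrac12 e^{-4\psi}|\omega|_g^2$, the several $|\omega|_g^2$ contributions cancel and one is left with precisely $s = \hat s + 2\Delta_h\psi + 2|\nabla\psi|_h^2 + \tfrac12 e^{-4\psi}\omega(\nu)^2$.

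For the nonnegativity of $\hat s$ under maximality I would argue uniformly from $\hat s = |\hat k|_{\hat h}^2 - \hat\tau^2$. Each of the three notions of maximality determines $\hat\tau$ via \eqref{tautotauhat} and \eqref{tautotautilde}: maximality with respect to $\hat g$ gives $\hat\tau = 0$; maximality with respect to $g$ forces $\tau = 0$ and hence $\hat\tau = -d\psi(\nu)$; and maximality with respect to $\tilde g$ forces $\tau = 2d\psi(\nu)$ and hence $\hat\tau = d\psi(\nu)$. In every case $\hat\tau^2 \leq d\psi(\nu)^2$, so $\hat s \geq |\hat k|_{\hat h}^2 - d\psi(\nu)^2$; and by \eqref{ktokhat} this lower bound equals $|k|_h^2 + \tfrac12 e^{-4\psi}(|\omega|_g^2 + \omega(\nu)^2)$, which is manifestly nonnegative.

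The main difficulty is bookkeeping rather than conceptual. One must fix the Lorentzian sign conventions once and carry them consistently through both Gauss equations and through the ambient-to-intrinsic Hessian and gradient decompositions, since the timelike normal reverses several signs relative to the familiar Riemannian formulas. The one genuine check is that the $|\omega|_g^2$ terms arising separately from $s_M$, from $Ric(\nu,\nu)$, and from \eqref{ktokhat} cancel exactly, leaving only the asserted $\tfrac12 e^{-4\psi}\omega(\nu)^2$; granting that, every step uses only \eqref{reve}, the relations \eqref{ktokhat}--\eqref{tautotautilde}, and standard submanifold calculus.
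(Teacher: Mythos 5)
Your derivation of the identity $s = \hat s + 2\Delta_h\psi + 2|\nabla\psi|_h^2 + \tfrac12 e^{-4\psi}\omega(\nu)^2$ is correct and follows essentially the same route as the paper: the vacuum Gauss equation $\hat s = |\hat k|_{\hat h}^2 - \hat\tau^2$ upstairs, the comparison relations \eqref{ktokhat}--\eqref{tautotauhat}, the Gauss equation for $\Sigma\subset M$, and then elimination of $Ric(\nu,\nu)$ and the ambient scalar curvature via \eqref{reve} together with the decomposition $\Box_g\psi = \Delta_h\psi - Dd\psi(\nu,\nu) + \tau\,d\psi(\nu)$. The cancellation of the $|\omega|_g^2$ terms that you flag as the one genuine check does go through, and your values $Ric(\nu,\nu) = Dd\psi(\nu,\nu) + d\psi(\nu)^2 + \tfrac12 e^{-4\psi}(\omega(\nu)^2 + |\omega|_g^2)$ and $s_M = -\tfrac32 e^{-4\psi}|\omega|_g^2$ are what \eqref{reve} gives.

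Where you genuinely diverge from the paper is the nonnegativity claim. The paper reads off $\hat s \geq 0$ directly from its first two displayed lines in the cases $\hat\tau = 0$ and $\tau = 0$, but for the case $\tilde\tau = 0$ it runs a separate Gauss--Codazzi computation for the conformally rescaled metric $\tilde g$, using \eqref{cric} and the conformal relation $e^{2\psi}\tilde s = s - 2\Delta_h\psi$ to exhibit $\hat s$ as a sum of squares. You instead observe that \eqref{tautotauhat} and \eqref{tautotautilde} force $\hat\tau \in \{0, -d\psi(\nu), d\psi(\nu)\}$ in the three cases, so that $\hat\tau^2 \leq d\psi(\nu)^2$ uniformly, whence $\hat s = |\hat k|_{\hat h}^2 - \hat\tau^2 \geq |k|_h^2 + \tfrac12 e^{-4\psi}(|\omega|_g^2 + \omega(\nu)^2) \geq 0$ by \eqref{ktokhat} (the last combination being $|\omega^T|_h^2$, the squared norm of the tangential part). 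This unified argument is shorter and avoids the conformal Gauss--Codazzi computation entirely; the paper's version has the side benefit of recording the conformal identities that it reuses later (e.g.\ in the definition of $\beta$ and the proof of Theorem \ref{pmt}), but as a proof of Lemma \ref{s} alone your route is cleaner.
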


\begin{proof}
\begin{align*}
\hat s &= |\hat k|_{\hat h}^2 - \hat \tau^2 \\
 &= |k|_h^2 + d\psi(\nu)^2 + \frac{1}{2} e^{-4\psi} \left( |\omega |_g^2 + \omega(\nu)^2 \right) - (\tau - d\psi(\nu))^2 \\
 &= s - R - 2 Ric(\nu,\nu) + \frac{1}{2} e^{-4\psi} \left( |\omega |_g^2 + \omega(\nu)^2 \right) + 2 \tau \cdot d\psi(\nu) \\
 &= s - 2\Delta_h\psi - 2|\nabla\psi|_h^2 - \frac{1}{2} e^{-4\psi} \omega(\nu)^2,
\end{align*}
where the first and third lines follow from the Gauss-Codazzi equations and the last equation is a consequence of \eqref{reve} and the fact that $\Box_g \psi = \Delta_h \psi - Dd\psi(\nu,\nu) + \tau \cdot d\psi(\nu).$  The first and second lines show that $\hat s\geq 0$ when $\hat\tau = 0$ or $\tau = 0$.

The Gauss-Codazzi equations for the conformally rescaled metric $\tilde g$ give rise to
\begin{align*}
	\tilde s &= \tilde R + 2 \tilde Ric(\tilde \nu,\tilde \nu) + |\tilde k|_{\tilde h}^2 - \tilde\tau^2 \\
	  &= 2 |d\psi |_{\tilde g}^2 + \frac{1}{2} e^{-4\psi} |\omega |_{\tilde g}^2 + 4 d\psi(\tilde \nu)^2 + e^{-4\psi} \omega(\tilde \nu)^2 + |\tilde k|_{\tilde h}^2 - \tilde\tau^2 \\
	  &= e^{-2\psi} \left( 2 |d\psi |_g^2 + \frac{1}{2} e^{-4\psi} |\omega |_g^2 + 4 d\psi(\nu)^2 + e^{-4\psi} \omega(\nu)^2 + |\tilde k|_h^2 \right) - \tilde\tau^2.
\end{align*}
Now use the conformal relation $e^{2\psi}\tilde s = s - 2\Delta_h\psi$ and the last line of the previous calculation to see that $\hat s \geq 0$ when $\tilde\tau = 0$.
\end{proof}

Now that we know our maximal hypersurfaces have nonnegative curvature, we can make use of the following result.  Even though the isometry group it refers to is $S^1$, the result still applies.  All we have to do is consider a new total space $\hat M$ in which the $\R$ fibers have been compactified to $S^1$ fibers with length $e^\psi$.

\begin{prop}[Anderson \cite{mA98}]
\label{andprop}
Let $N$ be an orientable complete Riemannian 3-manifold having nonnegative scalar curvature and admitting a free isometric $S^1$ action.  Let $V$ be the quotient space $V=N/S^1$ and let $\lambda(r)$ denote the circumference of a geodesic disc of radius $r$ centered at a fixed point in $V$.  Then there exists a constant $c$ such that
\[
\lambda(r) \leq c \cdot r.
\]
\end{prop}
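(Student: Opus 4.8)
The plan is to convert the three-dimensional hypothesis $s_N \geq 0$ into a two-dimensional curvature condition on $V$ and then run a Gauss--Bonnet argument on geodesic discs. First I would present $N \to V$ as a principal $S^1$-bundle and write $g_N$ in Kaluza--Klein form $g_N = \pi^* g_V + f^2\, \eta \otimes \eta$, where $g_V$ is the quotient metric that defines the discs, $f = e^u$ is the (Killing-invariant) length of the orbits, $\eta$ is a connection one-form, and $F = d\eta$ is its curvature. The standard dimensional-reduction identity then reads
\[
s_N = 2K - 2\,\frac{\Delta f}{f} - \tfrac{1}{2}\, f^2 |F|^2 ,
\]
where $K$ and $\Delta$ are the Gauss curvature and Laplacian of $(V,g_V)$. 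Writing $f = e^u$ gives $\Delta f/f = \Delta u + |\nabla u|^2$, so after discarding the manifestly nonnegative terms $|\nabla u|^2$ and $f^2|F|^2$ the condition $s_N \geq 0$ forces $K \geq \Delta u$ pointwise on $V$; equivalently the conformally rescaled metric $\bar g = f^2 g_V$ satisfies $K_{\bar g} = e^{-2u}(K - \Delta u) \geq 0$. This is the same mechanism that makes $\tilde s \geq 0$ in the excerpt.

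Second, writing $L(r) = \lambda(r)$ for the circumference of the $g_V$-geodesic disc $D(r)$ about the fixed center, I would invoke the first-variation/Gauss--Bonnet inequality
\[
L'(r) \leq 2\pi - \int_{D(r)} K \, dA ,
\]
valid in the distributional sense past the cut locus because each $D(r)$ is a topological disc. Combining with $K \geq \Delta u$ and the divergence theorem yields $L'(r) \leq 2\pi - \int_{\partial D(r)} \partial_\nu u \, ds$, and then, integrating in $r$ and applying the coarea formula,
\[
L(R) \leq 2\pi R - \int_{D(R)} \langle \nabla u, \nabla \rho \rangle \, dA ,
\]
where $\rho$ is the distance to the center. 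Thus the entire problem reduces to bounding the radial flux of $\nabla u = \nabla \log f$ from below, i.e. to showing $\int_{D(R)} \langle \nabla u, \nabla \rho \rangle \, dA \geq -c\,R$, after which the displayed inequality gives $\lambda(r) \leq c\,r$.

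The main obstacle is precisely this control of the conformal factor. The clean fact $K_{\bar g} \geq 0$ cannot be used directly, because $\bar g$ need not be complete (as the excerpt warns for $\tilde h$), so one may not simply apply Cohn--Vossen to $(V,\bar g)$. I would instead exploit the identity
\[
\int_{D(r)} K \, dA = \int_{D(r)} K_{\bar g} \, dA_{\bar g} + \int_{\partial D(r)} \partial_\nu u \, ds ,
\]
which shows that the flux is exactly the gap between the $g_V$- and $\bar g$-total curvatures, one side of which is pinned by $K_{\bar g} \geq 0$. Closing the estimate requires a maximum-principle or monotonicity argument for the subharmonic-type quantity $u$ on the nonnegatively curved background $\bar g$, while controlling the possibly incomplete ends; this is the delicate, genuinely two-dimensional step, and I expect it to be where the real work lies. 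Once the linear flux bound is established, the second paragraph delivers $\lambda(r) \leq c\,r$.
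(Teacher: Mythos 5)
Your skeleton is the right one --- dimensional reduction of $s_N \geq 0$ to a curvature inequality on $V$, then Gauss--Bonnet on geodesic discs --- and it is in fact Anderson's strategy: the paper does not prove this proposition (it is cited as a black box), but its Lemma \ref{G} is explicitly an adaptation of the same proof, so the comparison is fair. The proof is nonetheless incomplete at exactly the point you flag: the lower bound $\int_{D(R)} \langle \nabla u, \nabla \rho \rangle \, dA \geq -cR$ on the radial flux. Without it nothing is proved, and neither of the devices you float will close it. A maximum principle for $u$ is unavailable because the reduction only gives $\Delta u \leq K - |\nabla u|^2$, which carries no sign for $\Delta u$; and the identity relating the total curvatures of $g_V$ and $\bar g = f^2 g_V$ just restates the problem, since $K_{\bar g} \geq 0$ on a possibly incomplete surface gives no control of the boundary term.

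The missing idea is that you have discarded precisely the terms that control the flux. Keep $2|\nabla u|^2$ (and the $|F|^2$ term): setting $G(r) = \int_{D(r)} \left( 2|\nabla u|^2 + c_0 f^2 |F|^2 \right) dA$, the reduction gives $\int_{D(r)} K \geq \int_{\partial D(r)} \partial_\nu u + \tfrac{1}{2} G(r)$, and the flux is then estimated not pointwise but by Cauchy--Schwarz against the Dirichlet energy: the cutoff argument of Cheng--Yau (reproduced as inequality \eqref{Deltapsi} in the proof of Lemma \ref{G}) yields $-\int_{\partial D(r)} \partial_\nu u \leq \left[ \lambda(r) G'(r) \right]^{1/2}$. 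Feeding this into your first-variation inequality gives $\lambda'(r) + \tfrac{1}{2} G(r) \leq 2\pi + \left[ \lambda(r) G'(r) \right]^{1/2}$, and a simultaneous ODE/integration argument (the engine of Lemma \ref{G}) shows that $G$ stays bounded, that the integrated flux $\int_0^R [\lambda G']^{1/2} \, dr$ is controlled by $a(R)^{1/2} G(R)^{1/2}$, and hence that $a'(R) = \lambda(R) \leq 2\pi R + C\, a(R)^{1/2}$, which closes to $a(R) = O(R^2)$ and $\lambda(R) \leq cR$. Carrying out that bootstrap without circularity is the real content of Anderson's proof, and it is absent from your proposal. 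Two minor points: $D(r)$ need not be a topological disc, but $\chi(D(r)) \leq 1$ is all your inequality requires; and the coefficient of $f^2|F|^2$ in the reduction identity is $\tfrac{1}{4}$ in the usual conventions, though only its sign matters here.
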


Anderson also shows that square norms of tensor quantities relating to the geodesic curvature of the $S^1$ fibers and the obstruction to integrability of the horizontal distribution, in our case  $|\nabla \psi|_h^2$ and $\omega(\nu)^2$, are $L^1$ functions on $N / S^1$.  Adapting Anderson's proof, we can establish a similar result, Lemma \ref{G}, for maximal hypersurfaces of quotient vacuum spacetimes.

From here on we will assume all metric related quantities are determined by $h$ unless indicated otherwise.  Choose any point in $\Sigma$ and denote by $D(r)$ the disc of radius $r$ centered at that point.  Let $a(r)$ denote the area of $D(r)$ and let $\lambda(r)$ denote the length of the boundary $\partial D(r)$.

\begin{lem}
\label{G}
Suppose $\Sigma \subset M$ is maximal.  Define $G$ by
\[
G(r) = \int_{D(r)} \hat s + 2|\nabla \psi|^2 + \frac{1}{2} e^{-4\psi} \omega(\nu)^2 \; d\mu,
\]
where $d\mu$ is the volume form on $\Sigma$ given by $h$.
Then $G$ is bounded above by $4\pi$ and
\begin{equation}
\label{o} \int^r \left[ G'(z) \lambda(z) \right]^{1/2} \; dz = o(r),
\end{equation}
where $o(r)$ denotes a function satisfying $\lim_{r\ra\infty} o(r)/r = 0$.
\end{lem}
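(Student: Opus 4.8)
The plan is to convert $G$ into boundary data via Gauss--Bonnet, and then play the monotonicity of $G$ against the sublinear growth $\lambda(r) \le cr$ furnished by Proposition \ref{andprop}. First I would rewrite the integrand: since $\Sigma$ is maximal, Lemma \ref{s} gives $\hat s \ge 0$ and also the identity
\[
\hat s + 2|\nabla\psi|^2 + \tfrac12 e^{-4\psi}\omega(\nu)^2 = s - 2\Delta_h\psi .
\]
Writing $P(r) = \int_{\partial D(r)}\partial_n\psi\, ds = \int_{D(r)}\Delta_h\psi\, d\mu$ (divergence theorem) and using $s = 2K$ for the Gauss curvature $K$ of $h$, this turns into $\tfrac12 G(r) = \int_{D(r)} K\, d\mu - P(r)$. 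The integrand of $G$ is manifestly nonnegative, so $G$ is nonnegative and nondecreasing, and by the coarea formula $G'(r) = \int_{\partial D(r)}\big(\hat s + 2|\nabla\psi|^2 + \tfrac12 e^{-4\psi}\omega(\nu)^2\big)\, ds \ge 0$; in particular $G$ is bounded if and only if $G' \in L^1$.

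Next I would invoke Gauss--Bonnet. For almost every $r$ (regular values of the distance function) $D(r)$ is a compact surface with nonempty boundary, so $\chi(D(r)) \le 1$ \emph{regardless of the a priori unknown topology of $\Sigma$}, while the first variation of arc length gives $\lambda'(r) = \int_{\partial D(r)} k_g\, ds$. Hence $\int_{D(r)} K\, d\mu = 2\pi\chi(D(r)) - \lambda'(r) \le 2\pi - \lambda'(r)$, and combining with the identity above yields the basic differential inequality
\[
\lambda'(r) \le 2\pi - \tfrac12 G(r) - P(r) \qquad \text{a.e.}
\]

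The crux is to average this inequality over $[r_0,R]$. The obstacle is the term $P$, which I cannot control pointwise without already knowing $G$ is bounded; instead I would use that Anderson's proof also shows $|\nabla\psi|^2 \in L^1(\Sigma)$. Combined with $\lambda \le cr$ and the coarea identity $\int_{r_0}^R P\, dr = \int_{D(R)\setminus D(r_0)}\langle\nabla\psi,\nabla r\rangle\, d\mu$, a Cauchy--Schwarz estimate with a tail split over $\Sigma \setminus D(\rho)$ gives $\tfrac1R\int_{r_0}^R P\, dr \to 0$ \emph{without} assuming $G$ bounded. Averaging the differential inequality and using $\lambda \ge 0$ then produces $\tfrac12\langle G\rangle_R \le 2\pi + o(1)$, where $\langle G\rangle_R$ is the mean of $G$ on $[r_0,R]$. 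Since $G$ is nondecreasing, $\langle G\rangle_R \nearrow G_\infty := \lim_{r\to\infty} G(r) \in (0,\infty]$, so this simultaneously forces $G_\infty < \infty$ and $G_\infty \le 4\pi$; hence $G \le 4\pi$ everywhere.

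Finally, boundedness of $G$ makes $G' \in L^1([r_0,\infty))$, which (together with the pointwise bound $|P(r)| \le (\tfrac12 G'(r)\lambda(r))^{1/2}$ that explains why the combination $G'\lambda$ appears) delivers \eqref{o} at once: by Cauchy--Schwarz and $\lambda \le cr$,
\[
\int_R^r \big[G'(z)\lambda(z)\big]^{1/2}\, dz \le \bigg(\int_R^r G'\bigg)^{1/2}\bigg(\int_R^r \lambda\bigg)^{1/2} \le \big(G_\infty - G(R)\big)^{1/2}\sqrt{c/2}\; r .
\]
Dividing by $r$, letting $r \to \infty$ with $R$ fixed, and then sending $R \to \infty$ (so that $G_\infty - G(R) \to 0$) gives $\int^r [G'\lambda]^{1/2}\, dz = o(r)$. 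I expect the main difficulty to be the averaging step: extracting the \emph{sharp} constant $4\pi$ and the finiteness of $G_\infty$ at the same time requires the delicate interplay of Gauss--Bonnet, the sublinear area growth, Anderson's $L^1$ integrability, and monotonicity, and it demands some care about the almost-everywhere differentiability of $\lambda$ at the cut locus and the resulting topology of $D(r)$.
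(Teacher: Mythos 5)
Your proposal reaches both conclusions and is essentially sound, but it takes a genuinely different route from the paper for the hard part, namely proving that $G$ is bounded at all and extracting the constant $4\pi$. The paper is self-contained on this point: it first establishes the pointwise a priori bound $-\int_{D(r)}\Delta\psi\,d\mu \le [\lambda(r)G'(r)]^{1/2}$ by a Cheng--Yau cutoff argument (using only $\hat s \ge 0$, not any global integrability), feeds this into Gauss--Bonnet to get the differential inequality \eqref{Gineq}, and then runs an ODE-style contradiction argument with $H = G - 4\pi$ to show $G$ is bounded by \emph{some} constant; only after deducing \eqref{o} from boundedness does it integrate \eqref{Gineq} once more to sharpen the bound to $4\pi$. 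You instead control the troublesome term $P(r)=\int_{D(r)}\Delta\psi$ in integrated form by invoking Anderson's result that $|\nabla\psi|^2 \in L^1(\Sigma)$, which lets you average the Gauss--Bonnet inequality and obtain the sharp bound $G_\infty \le 4\pi$ in a single step, with no contradiction argument. This is cleaner where it works, but be aware of what it costs: the $L^1$ integrability of $|\nabla\psi|^2$ is essentially of the same depth as the boundedness of $G$ (Anderson proves it by the same cutoff-plus-Gauss--Bonnet machinery), so your proof outsources the main analytic difficulty to a black box rather than eliminating it; the paper's phrasing (``adapting Anderson's proof'') indicates it deliberately reproves this in the present setting, partly because the quantity controlled here also includes the $\hat s$ and $e^{-4\psi}\omega(\nu)^2$ terms and the sharp constant matters for Theorem \ref{pmt}. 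Two smaller points: your $\lambda'(r) = \int_{\partial D(r)}k_g\,ds$ should be the inequality $\lambda' \le \kappa$ (distributional derivative, negative jumps at the cut locus), which fortunately points the right way, and your statement ``$\tfrac1R\int_{r_0}^R P\,dr \to 0$'' is really the two-parameter statement that this average is bounded by a quantity tending to $0$ as $r_0\to\infty$, uniformly in $R$ --- which is what your double limit actually uses. Your derivation of \eqref{o} from boundedness of $G$ is identical to the paper's.
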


\begin{proof}
First of all, we claim that
\begin{equation}
\label{Deltapsi}
- \int_{D(r)} \Delta \psi \; d\mu \leq \left[ \lambda(r) G'(r) \right]^{1/2}.
\end{equation}
%
%To see this, choose a small $\delta>0$.  Given $c>1$ we can find a smooth function $\eta$ supported on $D(r)$ which satisfies $\eta = 1$ on $D(r-\delta)$ and $|\nabla \eta| \leq \frac{c}{\delta}$.  Integrating by parts and applying the Cauchy-Schwarz inequality gives
This is shown in the proof of Theorem 1 of \cite{CY75} in a more general context.  We detail the argument here.  Choose a small $\delta>0$.  Given $c>1$ we can find a smooth function $\eta$ supported on $D(r)$ which satisfies $\eta = 1$ on $D(r-\delta)$ and $|\nabla \eta| \leq \frac{c}{\delta}$.  Integrating by parts and applying the Cauchy-Schwarz inequality gives
\begin{align*}\label{psiineq}
-\int_{D(r)} \eta \cdot \Delta \psi &\leq \int_{D(r) \backslash D(r-\delta)} \langle \nabla \eta, \nabla \psi \rangle \nonumber \\
	&\leq \left[ \int_{D(r) \backslash D(r-\delta)} | \nabla \eta |^2 \right]^{1/2} \left[ \int_{D(r) \backslash D(r-\delta)} | \nabla \psi |^2 \right]^{1/2} \nonumber \\
	&\leq \left[ \left( \frac{c}{\delta} \right)^2 \left( a(r)-a(r-\delta) \right) \right]^{1/2} \Big[ G(r) - G(r-\delta) \Big]^{1/2} \nonumber \\
	&\leq  c \left[ \frac{1}{\delta} \left( a(r)-a(r-\delta) \right) \right]^{1/2} \left[ \frac{1}{\delta} \left( G(r) - G(r-\delta) \right) \right]^{1/2},
\end{align*}
where we have used the fact that $\hat s \geq 0$ from Lemma \ref{s} to obtain the factor involving $G$ in the third line.  Taking the limit $\delta \ra 0$ and then letting $c \ra 1$ establishes \eqref{Deltapsi}.

The Gauss-Bonnet theorem reads
\[
\int_{D(r)} s \; d\mu = 4\pi\chi(r) - 2\kappa(r),
\]
where $\chi(r)$ denotes the Euler characteristic of $D(r)$ and $\kappa(r)$ denotes the total geodesic curvature of the piecewise smooth curve $\partial D(r)$, including the sum of any exterior angles.  Putting the above two inequalities together with the expression for $s$ given in Lemma \ref{s}, we obtain
\begin{equation}
\label{Gineq}
G(r) \leq 4\pi\chi(r) - 2\lambda'(r) + 2\left[ \lambda(r) G'(r) \right]^{1/2},
\end{equation}
where we have used the fact that $\lambda'(r) \leq \kappa(r)$ (see \cite[p. 380]{GL83}).  The derivative $\lambda'(r)$ is to be interpreted in the sense of distributions.

If $G \leq 4\pi$ then the lemma is proved, so assume otherwise and choose $r_0$ so that $H(r) := G(r) - 4\pi > 0$ for all $r>r_0$.  Then we have
\[
\frac{1}{\lambda^{1/2}} \leq -2\frac{\lambda'}{H \lambda^{1/2}} + 2 \frac{(H')^{1/2}}{H},
\]
where we have used the fact that $\chi \leq 1$ since $D(r)$ is connected.  We now integrate each of the terms in the above inequality from $r_0$ to some value $r_1 > r_0$.  Lemma \ref{s} tells us that $\hat s \geq 0$.  So, for the leftmost term, we may use Proposition \ref{andprop} to obtain
\[
\int_{r_0}^{r_1} \lambda^{-1/2} \; dr \geq A (r_1^{1/2}-r_0^{1/2}),
\]
for some positive constant $A$.  For the middle term we have
\begin{multline*}
\int_{r_0}^{r_1} -2\frac{\lambda'}{H \lambda^{1/2}} \; dr
	\leq -2 \left( \frac{\lambda^{1/2}(r_1)}{H(r_1)} - \frac{\lambda^{1/2}(r_0)}{H(r_0)} \right) - 2 \int_{r_0}^{r_1} \frac{\lambda^{1/2}}{H^2} H' \; dr \\
	\leq 2 \frac{\lambda^{1/2}(r_0)}{H(r_0)}.
\end{multline*}
And, for the last term,
\[
\int_{r_0}^{r_1} \frac{(H')^{1/2}}{H} \; dr \leq \left( \int_{r_0}^{r_1} \frac{H'}{H^2} \; dr \right)^{1/2} \left( \int_{r_0}^{r_1} 1 \; dr \right)^{1/2} \leq \left[ H(r_0) \right]^{-1/2} r_1^{1/2}.
\]
Putting these results together gives
\[
A (r_1^{1/2}-r_0^{1/2}) \leq \frac{\lambda^{1/2}(r_0)}{H(r_0)} + 2 \left[ H(r_0) \right]^{-1/2} r_1^{1/2}.
\]
Since this is true for all $r_1 > r_0$, we must have
\[
A \leq 2 \left[ H(r_0) \right]^{-1/2}.
\]
$r_0$ can be made arbitrarily large and $A$ is independent of $r_0$.  It follows that $G$ is bounded above.

Since $G$ is bounded and non-decreasing, for any $\epsilon>0$ we can find a number $r_0$ so that $(G(r_1) - G(r_0))^{1/2} < \epsilon$ for all $r_1 > r_0$.  Therefore,
\[
\int_{r_0}^{r_1} [G' \lambda]^{1/2} \; dr \leq \left(G(r_1) - G(r_0) \right)^{1/2} \left(\frac{c}{2}r_1^2-\frac{c}{2}r_0^2 \right)^{1/2} < \epsilon \sqrt{c/2} \cdot r_1,
\]
where $c$ is the constant from Proposition \ref{andprop}.
This argument holds for all $\epsilon > 0$, so we have \eqref{o}.

Finally, going back to \eqref{Gineq}, integrating both sides from $0$ to $r$ and using \eqref{o} gives the upper bound $G \leq 4\pi$.
\end{proof}

We define the order of a complete Riemannian manifold $N$, just as in \cite{CY75}.  Let $V(r)$ denote the volume of a geodesic ball of radius $r$ centered at a fixed point $p$ in $N$.  The \emph{order of $N$} is defined to be the smallest number $O(N)$ such that
\[
\liminf_{r\ra\infty} ( V(r)/r^{O(N)} ) < \infty.
\]
With this, we can state the following Liouville type theorem.

\begin{thm}[Cheng-Yau \cite{CY75}]
\label{Liouville}
If $N$ is a complete Riemannian manifold of order $O(N) \leq 2$, then $N$ is parabolic, i.e., any subharmonic function that is bounded above on $N$ must be constant.
\end{thm}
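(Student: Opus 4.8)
The plan is to show that $N$ is parabolic directly, by establishing that bounded subharmonic functions have vanishing Dirichlet energy. Assume $N$ is connected (otherwise argue componentwise) and let $u$ be subharmonic with $M_0 := \sup_N u < \infty$. Rather than work with $u$ itself I would pass to $f = e^u$: since $t \mapsto e^t$ is smooth, increasing, and convex, $\Delta f = e^u(\Delta u + |\nabla u|^2) \ge 0$, so $f$ is again subharmonic, and moreover $0 < f \le e^{M_0} =: B$. Proving $f \equiv \text{const}$ forces $u \equiv \text{const}$, so it suffices to treat bounded, \emph{positive} subharmonic functions. Finally, unravelling the definition of order, $O(N) \le 2$ yields a sequence $r_j \ra \infty$ and a constant $K$ with $V(r_j) \le K r_j^2$ for all $j$ (for any $p \le 2$ one has $V(r)/r^2 \le V(r)/r^{p}$ once $r \ge 1$, so $\liminf_{r} V(r)/r^2 < \infty$).

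The first analytic ingredient is a Caccioppoli-type inequality. For any $\phi \in C_c^\infty(N)$, multiply $\Delta f \ge 0$ by the nonnegative test function $\phi^2 f$ and integrate by parts:
\[
0 \le \int_N \phi^2 f\, \Delta f \; d\mu = -\int_N \phi^2 |\nabla f|^2 \; d\mu - 2\int_N \phi f \, \langle \nabla \phi, \nabla f\rangle \; d\mu.
\]
Rearranging and applying Cauchy--Schwarz followed by Young's inequality to the last term gives
\[
\int_N \phi^2 |\nabla f|^2 \; d\mu \le 4 \int_N f^2 |\nabla \phi|^2 \; d\mu \le 4 B^2 \int_N |\nabla \phi|^2 \; d\mu,
\]
where boundedness $f \le B$ was used in the final step. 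Thus the local energy of $f$ is controlled entirely by the Dirichlet energy of the cutoff.

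The crux is therefore to produce cutoffs $\phi_j$ that equal $1$ on a fixed ball $B(\rho)$ yet satisfy $\int_N |\nabla \phi_j|^2 \, d\mu \ra 0$; this is exactly the statement that compact sets have zero capacity, i.e. that $N$ is parabolic. For this I would not use linear cutoffs (whose energy is only of order $V(r)/r^2$, bounded but not small) but \emph{logarithmic} ones: fix $\rho$, and for $R = r_j$ set $\phi_j \equiv 1$ on $B(\rho)$, $\phi_j(x) = \log(R/r(x))/\log(R/\rho)$ on $B(R) \setminus B(\rho)$, where $r(x)$ is the distance to the basepoint, and $\phi_j \equiv 0$ outside $B(R)$. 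Then $|\nabla \phi_j| \le \big(r(x)\log(R/\rho)\big)^{-1}$, and estimating $\int |\nabla\phi_j|^2$ by a coarea or integration-by-parts argument against $V$ and invoking $V(r_j) \le K r_j^2$ bounds this energy by a multiple of $1/\log(R/\rho)$, which tends to $0$ as $R = r_j \ra \infty$. Feeding $\phi_j$ into the Caccioppoli inequality yields $\int_{B(\rho)} |\nabla f|^2 \, d\mu = 0$ for every $\rho$, hence $\nabla f \equiv 0$; since $N$ is connected, $f$ and therefore $u$ is constant.

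The main obstacle is precisely this cutoff estimate in the borderline quadratic regime. Quadratic volume growth is the critical case for parabolicity, so the logarithmic profile is essential: any faster-decaying profile fails, and linear profiles are not small enough. The delicate point is that $O(N) \le 2$ supplies the bound $V(r) \le K r^2$ only along the subsequence $r_j$, not for all $r$; handling the intermediate integral $\int_\rho^{R} V(t)\,t^{-3}\,dt$ that appears after integrating by parts requires choosing the outer radius along $\{r_j\}$ and absorbing the off-sequence behavior, which is the technical heart of the argument and the place where the hypothesis is genuinely consumed.
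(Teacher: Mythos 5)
The paper does not prove this theorem itself (it is quoted from Cheng--Yau), but it reproduces the Cheng--Yau mechanism in the proof of Lemma \ref{lapselem}: one bounds the Dirichlet energy $\Gamma(r)=\int_{B(r)}|\nabla\gamma|^2$ of a logarithmic transform of the subharmonic function by $\Gamma^2\leq C\,V'\Gamma'$, integrates this first-order differential inequality, and derives a contradiction by summing $\tfrac{(r_{i+1}-r_i)^2}{V(r_{i+1})-V(r_i)}\geq\tfrac1c\bigl(1-\tfrac{r_i}{r_{i+1}}\bigr)^2$ along a sparse subsequence of radii where $V(r_i)\leq c\,r_i^2$. Your route is instead the capacity test: a Caccioppoli inequality for $f=e^u$ plus cutoffs of vanishing Dirichlet energy. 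The Caccioppoli step is fine, but the argument has a genuine gap exactly where you flag it, and the gap is not merely technical: with the single logarithmic cutoff $\phi=\log(R/r)/\log(R/\rho)$, the energy is controlled by $\bigl(\log(R/\rho)\bigr)^{-2}\bigl(R^{-2}V(R)+2\int_\rho^R t^{-3}V(t)\,dt\bigr)$, and the hypothesis $O(N)\leq 2$ only gives $V(r_j)\leq Kr_j^2$ along a sequence. Since $V$ is merely nondecreasing, the intermediate integral $\int_\rho^{r_j}t^{-3}V(t)\,dt$ can grow much faster than $\log(r_j/\rho)$ (e.g.\ if $V$ jumps to nearly $Kr_{j}^2$ immediately after $r_{j-1}$ with $r_j\gg r_{j-1}$, the annulus $(r_{j-1},r_j)$ alone contributes on the order of $(r_j/r_{j-1})^2$). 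So the quantity you need to tend to $0$ need not even be bounded, and the specific cutoff you chose does not work under the stated $\liminf$ hypothesis.

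The repair is either to switch to the Cheng--Yau differential inequality (which consumes the volume bound only at the good radii $r_i$, via the telescoping sum above), or to keep the capacity strategy but replace the logarithmic profile by a piecewise-linear cutoff adapted to a sparse subsequence $r_{j_1}<r_{j_2}<\cdots<r_{j_m}$ with $r_{j_{i+1}}\geq 2r_{j_i}$: dropping the cutoff by $1/m$ across each annulus $B(r_{j_{i+1}})\setminus B(r_{j_i})$ costs energy at most $m^{-2}\,V(r_{j_{i+1}})/(r_{j_{i+1}}-r_{j_i})^2\leq 4K/m^2$ per annulus, hence at most $4K/m\to 0$ in total. Either fix closes the argument; as written, the key estimate is asserted rather than proved, and the profile you committed to cannot deliver it.
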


It is now straightforward to prove Theorem \ref{topgeo}.

\textbf{Proof of Theorem \ref{topgeo}}

First consider the case where $\Sigma$ is simply connected.  Lemma \ref{s} and Proposition \ref{andprop} reveal that $(\Sigma,h)$ has order $\leq 2$, so Theorem \ref{Liouville} tells us that $(\Sigma,h)$ is parabolic.  It follows from the unifomization theory of surfaces (for example, see \cite{FK92}) that $(\Sigma,h)$ must be conformal to the Euclidean plane.
	
Next consider the case where $\Sigma$ is not simply connected.  Applying the above argument to the universal cover of $\Sigma$ shows that $(\Sigma,h)$ is conformally flat.  This leaves three possibilities for the topology of $\Sigma$: the plane, $\R^2$; cylinder, $S^1 \times \R$; or torus, $S^1 \times S^1$.  Since $\Sigma$ is neither simply connected nor compact, $\Sigma$ must be topologically equivalent to the cylinder, $S^1 \times \R$.

Define $G(r)$ as in Lemma \ref{G}.  The disc $D(r)$ will fail to be simply connected when $r$ is sufficiently large, say $r > r_0$, so $\chi(r) \leq 0$.  Therefore, integrating both sides of \eqref{Gineq} from $r_0$ to $r$, we find
\[
\int_{r_0}^r G \leq 2 \lambda(r_0) + o(r).
\]
Since $G$ is a nonnegative nondecreasing function, the above inequality can hold only if $G$ is identically zero.  This proves the second statement of the theorem.\qed

\section{Positivity of Mass}
\label{pmtsec}

A Hamiltonian framework is laid out for quotient spacetimes foliated by compact spacelike hypersurfaces in \cite{vM86}.  This is carried to the noncompact case in \cite{AV94} and \cite{mV95} and the ADM mass of an asymptotically flat two-dimensional hypersurface is formulated.  The authors argue that the usual model for asymptotic flatness, the Euclidean metric $\delta$, is not appropriate in the quotient setting discussed here.  Instead the model metric is taken to be $r^{-\beta}\delta$ for some constant $\beta$, where $\delta$ is given by $dr^2+r^2d\theta^2$ in standard polar coordinates $(r,\theta)$.  This definition is motivated by examples involving point sources and also a class of special quotient vacuum spacetimes called Einstein-Rosen waves (see \cite{ABS971}).  Note that $r^{-\beta} \delta$, for $0 \leq \beta < 2$, is the induced metric on a cone in Euclidean space with opening angle $\theta_0 = (2-\beta) \pi$.

By following the well known procedure which gives rise to ADM mass for the four-dimensional asymptotically Euclidean setting \cite{ADM62}, the authors conclude that $\beta$ is a constant multiple of the ADM mass for the quotient setting.  They arrive at a positive energy theorem: $\beta \geq 0$ with equality if and only if $(\Sigma,h,k)$ is initial data for three-dimensional Minkowski spacetime.  The authors find it necessary to require that $\beta \leq 2$ to ensure that their Hamiltonian framework remains well defined.  It is pointed out that the upper bound on $\beta$ contrasts the behavior of the ADM mass in the four-dimensional asymptotically Euclidean setting.

We define $\beta$, without imposing an asymptotic flatness assumption, as follows.

\begin{defn}
\[
\beta := \frac{1}{2\pi} \int_{\Sigma} \hat s + 2 |\nabla \psi|^2 + \frac{1}{2} e^{-4\psi} \omega(\nu)^2 \; d\mu = \frac{1}{2\pi}\int_{\Sigma} \tilde s \; d\tilde\mu,
\]
where $d\mu$ and $d\tilde\mu$ are the volume forms on $\Sigma$ corresponding to $h$ and $\tilde h$, respectively.
\end{defn}

It is pointed out in \cite{AV94} that the value of $\beta$ defined as we have done here is consistent with the definition of mass described above even though the integrand itself does not match the Hamiltonian that gives rise to the ADM mass.

We now prove Theorem \ref{pmt} which plays the role of a positive energy theorem and correlates well with the positive energy theorem derived in \cite{BCM95} where cylindrical symmetry and asymptotic flatness are assumed.

\textbf{Proof of Theorem \ref{pmt}}

\begin{proof}

Lemma \ref{G} shows that $\beta = G(\infty)/2\pi \leq 2$.  The inequality $\beta \geq 0$ follows from Lemma \ref{s} and the relation $e^{2\psi} \tilde s = s - 2\Delta_h \psi$.  The ``only if'' statement follows from the Gauss-Codazzi relations, which imply that $\hat s = |\hat k|_{\hat h}^2$ when $\hat\tau = 0$.  The definition of $\beta$ and Lemma \ref{s} then show that $\hat s = 0$.  To prove the ``if'' statement of the theorem, suppose $(\hat\Sigma,\hat h,\hat k)$ is Minkowskian initial data.  Then any Cauchy development $(\hat M, \hat g)$ of that data must be a flat spacetime, so $d\psi$ and $\omega$ must both vanish identically on $\hat M$.  Therefore $\beta=0$ on $\Sigma$.
\end{proof}

It does not seem to be as straightforward to prove Theorem \ref{pmt} for the cases $\tau = 0$ or $\tilde \tau = 0$ on $\Sigma$.  One can deduce $k = 0$ or $\tilde k = 0$, respectively, but $|\hat k|_{\hat h}^2$ will be a constant multiple of $d\psi(\nu)$.

The fact that we can formulate a positive mass theorem for maximal hypersurfaces with no condition of asymptotic flatness leads one to question what other similarities are shared by the maximality condition and the asymptotic flatness condition $h \sim r^{-\beta} \delta$.  This issue is especially pertinent if one were to carry further the Hamiltonian analysis in \cite{AV94} under the maximal hypersurface gauge assumption, as is done in \cite{mV95}.  While similarities between the two conditions may be reassuring, similarities that are too strong may suggest that the two hypotheses may, when considered together, leave little room for interesting results, or worse, lead to a contradiction.

The following proposition indicates a similarity, albeit a rather weak one, between the maximal hypersurface condition and the asymptotic flatness condition.  It states that when $\Sigma$ is maximal, the circumference $\lambda(r)$ of a geodesic disc with sufficiently large radius $r$ is no more than the circumference of a disc of equal radius in the Euclidean plane, and Euclidean-sized circumferences occur only when $h$ is exactly the Euclidean metric.  A more meaningful result would be something along the lines of $\lambda(r) \sim (2-\beta)\pi r$, but while no counterexamples to this are known to this author, neither is a proof.%\footnote{Consider discussing the results of McOwen regarding the asymptotic behavior of solutions to the differential equations manifested here as the constraint equations.}

\begin{prop}
\label{areaprop}
If $\Sigma$ is maximal, then
\[
\lambda(r) \leq 2\pi r + o(r).
\]
Moreover, as long as $(\Sigma,h)$ is not a flat cylinder, then one has equality if and only if $\beta=0$.%  In particular, a maximal hypersurface $(\Sigma,h)$ has order $O(\Sigma)\leq 2$.\footnote{check this..should be super easy to verify.  Should be true for $\tilde h$ too!}
\end{prop}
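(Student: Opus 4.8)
The plan is to integrate the curvature inequality \eqref{Gineq}, exactly as in the proof of Theorem \ref{topgeo}, and read off the circumference bound. Since $D(r)$ is connected we always have $\chi(r) \le 1$, so \eqref{Gineq} gives $G(r) \le 4\pi - 2\lambda'(r) + 2[\lambda(r)G'(r)]^{1/2}$. Integrating from $0$ to $r$, using $\int_0^r \lambda'\,dz = \lambda(r)$ (with $\lambda(0)=0$, interpreting $\lambda'$ distributionally as in Lemma \ref{G}) together with the estimate \eqref{o}, I would obtain
\[
2\lambda(r) \le 4\pi r - \int_0^r G(z)\,dz + o(r).
\]
Because the integrand defining $G$ is nonnegative — $\hat s \ge 0$ by Lemma \ref{s}, and the remaining terms are squares — we have $G \ge 0$, and discarding the $\int_0^r G$ term yields the stated bound $\lambda(r) \le 2\pi r + o(r)$. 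This covers both alternatives of Theorem \ref{topgeo}; in the flat-cylinder case one may in fact check directly that $\lambda(r)$ stays bounded.

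For the equality statement I would first sharpen the estimate. The function $G$ is nondecreasing (its integrand is nonnegative) and, by the definition of $\beta$, satisfies $G(r) \to G(\infty) = 2\pi\beta$. A Ces\`aro-type averaging argument then gives $\int_0^r G(z)\,dz = 2\pi\beta\, r + o(r)$, since $0 \le 2\pi\beta - G(z) \to 0$ forces the average of $2\pi\beta - G$ over $[0,r]$ to tend to $0$. Substituting into the displayed inequality produces the refined bound $\lambda(r) \le (2-\beta)\pi r + o(r)$. This is exactly where the obstruction to a stronger conclusion lives: I have only an upper bound and no matching lower bound on $\lambda$, so the asymptotic equivalence $\lambda(r) \sim (2-\beta)\pi r$ stays out of reach and the equality case must instead be settled by rigidity.

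To finish, I read ``equality'' as $\lambda(r)/r \to 2\pi$. If $\beta > 0$, the refined bound gives $\limsup_{r\to\infty}\lambda(r)/r \le (2-\beta)\pi < 2\pi$, so equality fails; this is the ``only if'' direction. For the ``if'' direction, assume $\beta = 0$ with $(\Sigma,h)$ not a flat cylinder, so by Theorem \ref{topgeo} it is conformal to the plane and hence simply connected. Then $G(\infty) = 0$ together with monotonicity forces $G \equiv 0$, so pointwise $\hat s = 0$, $\nabla\psi = 0$, and $\omega(\nu) = 0$; in particular $\psi$ is constant. Feeding this into Lemma \ref{s} gives $s \equiv 0$, whence $(\Sigma,h)$ is a complete simply connected flat surface and is therefore isometric to the Euclidean plane, so $\lambda(r) = 2\pi r$ exactly. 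The main obstacle I anticipate is thus conceptual rather than computational: with no lower bound on $\lambda$ available, the characterization of equality has to be organized around the rigidity of the $\beta = 0$ case. The necessity of excluding the flat cylinder becomes transparent here, since it is precisely a maximal hypersurface with $\beta = 0$ for which $\lambda(r)$ is bounded, giving $\lambda(r)/r \to 0 \ne 2\pi$ despite $\beta = 0$; a secondary point requiring care is the distributional interpretation of $\lambda'$ when integrating \eqref{Gineq}, as already used in the preceding proofs.
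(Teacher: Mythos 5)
Your proposal is correct and follows the same route as the paper: the paper's proof is exactly the one-line integration of \eqref{Gineq} yielding $2\lambda \leq 4\pi r - \int_0^r G + o(r)$, from which the bound follows since $G \geq 0$. The details you supply beyond that --- the Ces\`aro argument giving $\int_0^r G = 2\pi\beta\, r + o(r)$, the resulting sharper bound $\lambda(r) \leq (2-\beta)\pi r + o(r)$ for the ``only if'' direction, and the rigidity argument ($G \equiv 0 \Rightarrow s \equiv 0 \Rightarrow$ Euclidean plane) for the ``if'' direction --- are precisely what the paper leaves implicit, and they are all sound.
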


\begin{proof}
The proof follows immediately from integrating $G$ as in the proof of Theorem \ref{pmt}.  This gives
\[
2\lambda \leq 4\pi r - \int_0^r G + o(r).
\]
\end{proof}

\section{Foliations and the Lapse Function}
\label{lapsesec}

In this section we prove Theorem \ref{maxgaugethm}.  We assume that $M$ is diffeomorphic to a product $\Sigma \times \R$ and each slice $\Sigma_t = \Sigma \times \{ t \}$ is a noncompact maximal hypersurface, geodesically complete with respect to the metric $h$ induced by $g$.  The \emph{lapse} (or \emph{lapse function}) on a hypersurface $\Sigma_t$ is the function $u:\Sigma_t \ra \R$ given by
\[
u = \left(-|dt|_g^2 \right)^{-1/2},
\]
where $t$ is the function on $M=\Sigma \times \R$ defined by $(p,x) \ra x$.

If each hypersurface is assumed to be maximal with respect to $g$, then one finds (see, for example, \cite{KM96}) that $u$ satisfies
\begin{equation}
\label{lapseeqn}
\Delta_h u - \left( |k|_h^2 + Ric(\nu,\nu) \right) u = 0.
\end{equation}
This equation is known as the \emph{lapse equation}.  Analogously, we have lapse equations for the cases $\hat \tau = 0$ and $\tilde \tau = 0$, respectively:
\begin{gather*}
\Delta_{\hat h} u - |\hat k|_{\hat h}^2 u = 0; \\
\Delta_{\tilde h} u - \left( |\tilde k|_{\tilde h}^2 + \tilde Ric(\tilde \nu,\tilde \nu) \right) u = 0,
\end{gather*}
where, in the $\hat\,$ case, we deal with the pullbacks via $\pi_\Sigma$ of functions on $\Sigma$.

The following Lemma will help us in dealing with solutions of the lapse equation.

\begin{lem}
\label{lapselem}
Let $N$ be a complete Riemannian manifold of order $O(N)\leq 2$.  Consider functions $w : N \ra [0,\infty)$ and $v : N \ra (c_1,c_2)$, where $c_1$ and $c_2$ are positive constants.  Then the only positive bounded solutions $u : N \ra (0,\infty)$ to the equation
\[
\frac{\Delta u}{u} = \frac{\Delta v}{v} + w
\]
are constant multiples of $v$.  Note that the existence of a positive bounded solution implies that $w$ is the zero function.\footnote{If it happens that $w$ is zero from the outset, then the boundedness requirement can be removed from $u$.  The same proof goes through with $\gamma = \ln (u/v)$.}
\end{lem}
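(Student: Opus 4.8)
The plan is to reduce the given equation to a statement about a weighted Laplacian and then run the Cheng--Yau parabolicity machinery already used in Lemma \ref{G}. Set $\phi = u/v$, so that the quantity $\gamma := \ln(u/v) = \ln \phi$ named in the footnote is its logarithm. A direct computation, using $v^2 \nabla \phi = v\nabla u - u \nabla v$ together with the hypothesis $\Delta u = u(\Delta v / v + w)$, collapses the equation to the divergence-form identity
\[
\Div\!\left( v^2 \nabla \phi \right) = v^2 w\, \phi .
\]
Writing $\rho = v^2$ and $\Delta_\rho = \rho^{-1}\Div(\rho \nabla \cdot) = \Delta + 2\langle \nabla \ln v, \nabla \cdot\rangle$ for the associated weighted Laplacian, this reads $\Delta_\rho \phi = w\,\phi \geq 0$, so $\phi$ is a positive $\Delta_\rho$-subharmonic function. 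The value of the substitution is that the (uncontrolled) first-order drift coming from $\nabla v$ is absorbed into the weight $\rho$; and since $\rho$ is pinched between the positive constants $c_1^2$ and $c_2^2$, the weighted volume $\int_{B_r}\rho\,d\mu$ is comparable to the ordinary volume, so the weighted manifold $(N,\rho\,d\mu)$ still has order $\leq 2$.

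For the main statement, observe that $\phi = u/v$ is bounded above, since $u$ is bounded and $v \geq c_1 > 0$. I would then invoke the weighted analogue of Theorem \ref{Liouville}: a bounded-above $\Delta_\rho$-subharmonic function on a weighted manifold of order $\leq 2$ is constant. Concretely this is the same integral estimate as in Lemma \ref{G}: multiply $\Div(\rho\nabla\phi)\geq 0$ by $\eta^2(\Phi-\phi)$, with $\Phi = \sup \phi$ and a logarithmic cutoff $\eta$, integrate by parts, and apply Cauchy--Schwarz to reach
\[
\int \eta^2 \rho\, |\nabla \phi|^2 \, d\mu \;\leq\; 4\,\Phi^2 c_2^2 \int \rho\, |\nabla \eta|^2 \, d\mu ,
\]
whose right-hand side tends to $0$ along a volume-growth sequence because $N$ has order $\leq 2$. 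Hence $\nabla \phi \equiv 0$, so $u = c\,v$ for a constant $c > 0$. Feeding this back into $\Delta_\rho \phi = w\phi$ gives $0 = w\,c$ with $c>0$, so $w \equiv 0$; this is precisely the asserted conclusion that a bounded solution can exist only when $w$ vanishes.

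The footnote case, where $w \equiv 0$ is known at the outset and no bound on $u$ is assumed, is where passing to $\gamma = \ln \phi$ earns its keep. Since $\Delta_\rho \gamma = \Delta_\rho\phi/\phi - |\nabla \gamma|^2$, the hypothesis $w=0$ yields the clean relation
\[
\Delta_\rho \gamma = -\,|\nabla \gamma|^2 ,
\]
in which the zeroth-order term has disappeared and the surviving nonlinear term carries a favorable sign. Multiplying by $\eta^2$ and integrating by parts now produces $\int \eta^2 \rho |\nabla\gamma|^2 \,d\mu \leq 4\int \rho|\nabla\eta|^2\,d\mu$ directly, with no boundedness of $\gamma$ required, and the logarithmic cutoff together with order $\leq 2$ again force $\nabla\gamma \equiv 0$. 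This explains exactly why boundedness of $u$ is dispensable when $w=0$: in the bounded case the analogous estimate on $\gamma$ would retain an extra term $\int \eta^2 w\rho\,d\mu$ of uncontrolled size, and it is precisely this term that the bound on $\phi$ is used to circumvent.

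The step I expect to be the main obstacle is the transfer of Cheng--Yau's parabolicity theorem (Theorem \ref{Liouville}), stated for the ordinary Laplacian, to the weighted operator $\Delta_\rho$. The difficulty is that $\nabla v$, and hence the drift, is not controlled pointwise, so one cannot simply perturb the unweighted estimate. The resolution I would make precise is that Cheng--Yau's argument is driven entirely by the volume growth of the relevant measure through the logarithmic cutoff, and no derivative of the weight enters; since $c_1^2 \leq \rho \leq c_2^2$ makes the weighted and unweighted volumes comparable, their integral estimate carries over verbatim with $d\mu$ replaced by $\rho\,d\mu$. Verifying this adaptation carefully --- in particular that $\int \rho|\nabla\eta|^2\,d\mu$ vanishes along the order-$\leq 2$ sequence --- is the technical heart of the argument.
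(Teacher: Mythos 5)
Your argument is correct, and it reaches the paper's conclusion by a cleaner route. The paper works with $\gamma = \ln(c_0 - u/v)$, multiplies the resulting differential inequality by $(\ln v)^n$, integrates by parts, and iterates, summing the exponential series $\sum 2^n(\ln v)^n/n!$ to manufacture the weight $v^2$ by hand; the endpoint is the estimate $\int_B v^2|\nabla\gamma|^2 \leq -\int_{\partial B} v^2\,\partial_\nu\gamma$, after which the Cheng--Yau volume-growth (telescoping) argument finishes. Your identity $\Div(v^2\nabla(u/v)) = uvw \geq 0$ produces exactly that weight in one line, exposing the structural point the paper's iteration obscures: $u/v$ is subharmonic for the weighted Laplacian with pinched weight $v^2$, so everything reduces to parabolicity of the weighted manifold, which follows from the unweighted order-$\leq 2$ hypothesis by volume comparability. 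Both proofs ultimately rest on the same engine (Theorem \ref{Liouville} and its proof mechanism), so what your version buys is transparency, not new generality. One caution: since the order condition is a $\liminf$, the naive logarithmic-cutoff computation does not by itself force $\int\rho|\nabla\eta|^2 \ra 0$ --- you need the Cheng--Yau dyadic/telescoping argument applied to $\Gamma(r) = \int_{B(r)}\rho|\nabla\phi|^2$ (exactly as the paper does in the final third of its proof), not just a sparse sequence of good radii fed into a fixed annular cutoff. You correctly flag this as the technical heart and your proposed resolution (the Cheng--Yau estimate depends only on volume growth, which survives a weight pinched between positive constants) is the right one, but in a final write-up that step should be carried out via the differential inequality $\Gamma \leq C(V'\Gamma')^{1/2}$ rather than asserted from the cutoff bound.
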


\begin{proof}
Let $c_0$ be an upper bound for $u/v$ and put $\gamma = \ln\left( c_0 - u/v \right)$.  Then we have
\begin{align}
\label{legamma}
\Delta \gamma + |\nabla \gamma|^2 &= -2 \langle \nabla \gamma,\nabla \ln v \rangle - (c_0 e^{-\gamma} - 1) \cdot w \leq -2 \langle \nabla \gamma,\nabla \ln v \rangle
\end{align}
%We will integrate both sides of the inequality over a geodesic ball $B=B(r)$ of radius $r$, but first we point out the following relation obtained by integration by parts:
Let $n$ be a positive integer.  Multiply both sides of the above inequality by $(\ln v)^n$ and integrate over a geodesic ball $B=B(r)$ of radius $r$.  Integration by parts gives
\begin{multline*}
n \int_B (\ln v)^{n-1} \langle \nabla \gamma, \nabla \ln v \rangle \geq
	\int_{\partial B} (\ln v)^n \partial_r \gamma + \int_B (\ln v)^n |\nabla \gamma|^2 \\
	+ 2\int_B (\ln v)^n \langle \nabla \gamma, \nabla \ln v \rangle.
\end{multline*}
Here $\partial_r$ denotes differentiation with respect to the outward unit normal vector along $\partial D$, and these integrals are in terms of the volume form given by the metric on $N$.  The relation can be iterated to obtain
\begin{multline*}
-2 \int_B \langle \nabla \gamma, \nabla \ln v \rangle \leq - \sum_{n=1}^m \frac{2^n}{n!} \left[ \int_{\partial B} (\ln v)^n \partial_r \gamma + \int_B (\ln v)^n |\nabla \gamma|^2 \right] \\
- \frac{2^{m+1}}{m!} \int_B (\ln v)^m \langle \nabla \gamma,\nabla \ln v \rangle.
\end{multline*}
Letting $m \ra \infty$, this becomes
\[
-2 \int_B \langle \nabla \gamma, \nabla \ln v \rangle \leq \int_{\partial B} (1-v^2) \partial_r \gamma + \int_B (1-v^2) |\nabla \gamma|^2.
\]
Now put this together with equation \eqref{legamma} to get
\begin{align*}
\int_B |\nabla \gamma|^2 &\leq \int_{\partial B} (1-v^2) \partial_r \gamma + \int_B (1-v^2)|\nabla \gamma|^2 - \int_B \Delta \gamma \\
	&= -\int_{\partial B} v^2 \partial_r \gamma + \int_B (1-v^2)|\nabla \gamma|^2,
\end{align*}
and this gives
\[
\int_B v^2 |\nabla \gamma|^2 \leq - \int_{\partial B} v^2 \partial_\nu \gamma \leq \left( \int_{\partial B} v^4 \right)^{1/2} \left( \int_{\partial B} |\nabla \gamma|^2 \right)^{1/2}.
\]
Put $\Gamma(r) = \int_{B(r)} |\nabla \gamma|^2$ and let $V(r)$ denote the volume of $B(r)$.  Then we have
\[
c_1^2 \, \Gamma \leq c_2^2 \left( V' \Gamma' \right)^{1/2}.
\]

Next we show that a contradiction arises if $\Gamma$ is not identically zero.  We follow the same arguments in the proof of Theorem \ref{Liouville} in \cite{CY75}.  If $\Gamma \neq 0$, then
\[
\left( \frac{c_2}{c_1}\right)^4 \frac{\Gamma'}{\Gamma^2} \geq \frac{1}{V'}.
\]
Integrating both sides gives
\begin{multline*}
\left( \frac{c_2}{c_1}\right)^4 \left( \frac{1}{\Gamma(r_0)}-\frac{1}{\Gamma(r)} \right) \geq \int_{r_0}^{r} \frac{1}{V'(z)} \,dz \geq \left( \int_{r_0}^r dz \right)^2 \bigg/ \int_{r_0}^{r} V'(z) \,dz \\
\geq \frac{(r-r_0)^2}{V(r)-V(r_0)}.
\end{multline*}
Since $(\Sigma,h)$ has order $\leq 2$, there exists a constant $c > 0$ and a sequence $0 < r_0 < r_1 < r_2 < \cdots$ with $V(r_i) < c r_i^2$.  Then
\[
\left( \frac{c_2}{c_1}\right)^4 \frac{1}{\Gamma(r_0)} \geq \sum_{i=0}^\infty \frac{(r_{i+1}-r_i)^2}{V(r_{i+1})-V(r_i)} > \frac{1}{c}\sum_{i=0}^{\infty} \left( 1 - \frac{r_i}{r_{i+1}} \right)^2.
\]
By passing to a subsequence, we may arrange that $r_i/r_{i+1} < 1/2$, which gives a contradiction.  Therefore $\Gamma$ is identically zero.  This completes the proof.\end{proof}

Lemma \ref{lapselem} is the tool we need to prove Theorem \ref{maxgaugethm}.
 
\noindent \textbf{Proof of Theorem \ref{maxgaugethm}}

We begin by considering the case for which $\Sigma_t$ is maximal with respect to $g$ for all $t$.  Afterwards we will show that the cases of maximality with respect to $\hat g$ and $\tilde g$ essentially boil down to this case.  Put $f=e^\psi$.  We use the fact that $\Box_g \psi = \Delta_h \psi - Dd\psi(\nu,\nu) + \tau \cdot d\psi(\nu)$ together with \eqref{reve} to see that
\begin{align*}
Ric(\nu,\nu) &= D d \psi(\nu,\nu) + d\psi(\nu)^2 \\
	&= \Delta_h \psi - \Box_g \psi + d\psi(\nu)^2 \\
	&= \Delta_h \psi + |\nabla\psi|_h^2 \\
	&= \Delta_h f / f.
\end{align*}
Therefore, for any particular value of $t$, the lapse equation \eqref{lapseeqn} on $\Sigma_t$ takes the form
\[
\Delta_h u/u = \Delta_h f / f + |k|_h^2.
\]
We apply Lemma \ref{lapselem} with $v=f$ and $w = |k|_h^2$ to conclude that $k$ vanishes identically on $\Sigma_t$ and that $u$ is a constant multiple of $f$ on $\Sigma_t$.  Through an appropriate reparameterizion of $t$, we may arrange that $u$ is in fact \emph{equal} to $f$ on each $\Sigma_t$, so we shall do so without loss of generality.

%It follows from Theorem \ref{topgeo} that there exist coordinates $x,y$ globally defined on $\Sigma_t$ and a function $\alpha:\Sigma_t \ra \R$ so that $h = e^{2\alpha} \left( dx^2 + dy^2 \right)$, where $dx^2 + dy^2$ represents either the Euclidean metric on $\Sigma = \R^2$ or else a flat metric on $\Sigma = S^1 \times \R$.  (One may assume $\alpha \equiv 0$ in the latter case.)  Let $V$ be a vector field on $M$ which is orthogonal to each hypersurface $\Sigma_{t'}$ and satisfies $dt(V)=1$.  The functions $\alpha,x,y$ can pulled back from $\Sigma_t$ to any other hypersurface $\Sigma_{t'}$ via the flow generated by the vector field $V$ (or $-V$).  Then $\{t,x,y\}$ is a coordinate system globally defined on $M = \Sigma \times \R$.  In these coordinates, the metric $g$ takes the form
%
It follows from Theorem \ref{topgeo} that there exist coordinates $x,y$ globally defined on $\Sigma_0$ and a function $\alpha:\Sigma_0 \ra \R$ so that $h = e^{2\alpha} \left( dx^2 + dy^2 \right)$, where $dx^2 + dy^2$ represents either the Euclidean metric on $\Sigma_0 = \R^2$ or else a flat metric on $\Sigma_0 = S^1 \times \R$.  (One may assume $\alpha \equiv 0$ in the latter case.)  Let $V$ be a vector field on $M$ which is orthogonal to each hypersurface $\Sigma_{t}$ and satisfies $dt(V)=1$.  The functions $\alpha,x,y$ can pulled back from $\Sigma_0$ to any other hypersurface $\Sigma_{t}$ via the flow generated by the vector field $V$ (or $-V$).  Then $\{t,x,y\}$ is a coordinate system globally defined on $M = \Sigma \times \R$.  In these coordinates, the metric $g$ takes the form
\[
g = -e^{2\psi(t,x,y)} dt^2 + e^{2\alpha(x,y)} \left( dx^2 + dy^2 \right).
\]
The function $\alpha$ is independent of $t$ because, as shown above, $\Sigma_t$ is totally geodesic for all $t$.

From the Gauss-Codazzi relation $\hat s = |\hat k|_{\hat h}^2 - \hat \tau^2$ and the relations \eqref{ktokhat} and \eqref{tautotauhat},  we see that $\hat s = 0$.  Therefore, from Lemma \ref{s}, we have
\begin{equation}
\label{psitoalpha}
e^{-\psi} \Delta_h e^\psi = \Delta_h \psi + |\nabla \psi|_h^2 = s/2 = -\Delta_h \alpha.
\end{equation}
This shows that $e^{-\psi} \Delta_h e^\psi$ is independent of $t$.  Thus, for any $t$, setting $u=e^{\psi(0,x,y)}$, $v=e^{\psi(t,x,y)}$, and $w=0$ in Lemma \ref{lapselem} shows that $u$ is a constant multiple of $v$.  Consequently, there is a smooth function $\rho:\R \ra \R$ such that $\psi(t,x,y) = \psi(0,x,y) + \rho(t)$ for all $t$.

%In terms of the coordinates $\{t,x,y\}$, define $\ddot\psi = \partial^2\psi/\partial t^2$.  Because the hypersurface metrics $h=e^{2\alpha}(dx^2 + dy^2)$ are independent of $t$, we have $\Delta_h \ddot\psi + |\nabla \ddot\psi|_h^2 = 0$.  Equivalently, $e^{\ddot\psi}$ is a harmonic function on $\Sigma_t$.  It follows that $\ddot\psi$ is constant on each $\Sigma_t$, because $h$ is conformal to a flat metric either on $\Sigma_t = \R^2$ or on $\Sigma_t = S^1 \times \R$.

From \eqref{reve}, we have
\[
Ric(\nu,\nu) - d\psi(\nu)^2 = Dd\psi(\nu,\nu) = \nu(d\psi(\nu)) - d\psi(D_\nu \nu).
\]
%It is straightforward to verify that $d\psi(D_\nu \nu) = |\nabla\psi|_h^2$.  Combining this with the relation $Ric(\nu,\nu) = \Delta_h \psi + |\nabla\psi|_h^2$ established above, we arrive at
%\[
%e^{-2\psi} \ddot \psi = 2 |\nabla \psi|_h^2 + \Delta_h \psi = \frac{1}{2} e^{-2\psi} \Delta_h e^{2\psi}.
%\]
It is straightforward to verify that $d\psi(D_\nu \nu) = |\nabla\psi|_h^2$.  Combining this with the relation $Ric(\nu,\nu) = \Delta_h \psi + |\nabla\psi|_h^2$ established above, we arrive at
%\[
%e^{-2\psi} \ddot \psi = 2 |\nabla \psi|_h^2 + \Delta_h \psi = \frac{1}{2} e^{-2\psi} \Delta_h e^{2\psi}.
%\]
\[
\Delta_h e^{2\psi} = 2e^{2\psi} \left[ \Delta_h \psi + 2|\nabla\psi|_h^2 \right] = 2e^{2\psi} \left[ d\psi(\nu)^2 + \nu(d\psi(\nu)) \right] = 2\frac{d^2\rho}{dt^2}.
\]
%As shown in the proof of Theorem \ref{topgeo}, we know that $\Sigma_t$ has order $O(N)\leq 2$.  Therefore, in consideration of the hypothesis that $\psi$ is bounded, we deduce from Theorem \ref{Liouville} that $\ddot\psi \equiv 0$ and $\psi$ is constant on each $\Sigma_t$.  It follows that
This shows that $\Delta_h e^{2\psi}$ is constant in $x$ and $y$.  As shown in the proof of Theorem \ref{topgeo}, we know that $\Sigma_0$ has order $O(N)\leq 2$.  Therefore, we deduce from Theorem \ref{Liouville} that $\psi$ is in fact constant on $\Sigma_0$.  This implies that $\rho$ is a linear function of $t$ and that
\[
\psi = bt+ c,
\]
where $b$ and $c$ are constants on $M$.

Returning to equation \eqref{psitoalpha}, we see that $\alpha$ is harmonic and must therefore be a linear function of $x$ and $y$.  It follows that the nontriviality of $\alpha$ can be washed away through a judicious choice of the coordinate functions $x$ and $y$, so we will assume, without loss of generality, that $\alpha$ is identically zero.  Thus we have
\[
\hat g =  e^{2(bt+c)} \left( -dt^2 + dz^2 \right) + dx^2 + dy^2.
\]
If $b=0$, then make the coordinate change $\tilde t = e^c t$ and $\tilde z = e^c z$, otherwise set $\tilde t = \frac{1}{b} e^{bt+c} \cosh (bz)$ and $\tilde z = \frac{1}{b} e^{bt+c} \sinh (bz)$.  In either case, $\hat g$ takes the Minkowskian form
\[
\hat g = - d\tilde t^2 + d\tilde z^2 + dx^2 + dy^2.
\]
This proves the theorem when each $\Sigma_t$ is maximal with respect to $g$.

Next consider case where $\Sigma_t$ is maximal with respect to $\hat g$ for all $t$.  The lapse equation reads
\[
\Delta_{\hat h} u = |\hat k|_{\hat h}^2 u.
\]
Since, for each $t$, $\hat \Sigma_t$ is assumed to be the pullback $\pi_M^{-1}(\Sigma_t)$, the function $u$ must be invariant with respect to the Killing action on $\hat \Sigma_t$.  Consider the three-manifold obtained by compactifying the $\R$ fibers of the bundle $\hat\Sigma_t \ra \Sigma_t$ so that they become $S^1$ fibers (of length $e^\psi$).  There will be nothing lost in assuming $(\hat\Sigma,\hat h)$ has this form.  Then an upper bound on $\psi$ guarantees that $(\hat \Sigma_t,\hat h)$ has order $O(\hat\Sigma_t)\leq 2$.  We may apply Theorem \ref{Liouville} (or Lemma \ref{lapselem} with $v \equiv 1$) to conclude that $|\hat k|^2$ is identically zero and $u$ is constant on $\hat \Sigma_t$.  Equation \eqref{ktokhat} shows that $k$ vanishes identically on $\Sigma_t$, and this is true for each $t$.  Thus we have found ourselves in the case ($\tau  \equiv 0$) for which we have already proven the theorem.

Finally consider the case where $\Sigma_t$ is maximal with respect to $\tilde g$ for all $t$.  Equation \eqref{cric} shows that the lapse equation reads
\[
\Delta_{\tilde h} u = \left( |\tilde k|_{\tilde h}^2 + 2 d\psi(\tilde \nu)^2 \right) u.
\]
Theorem \ref{Liouville} (or Lemma \ref{lapselem} with $v \equiv 1$) shows that $u$ is constant on $\Sigma_t$, and both $|\tilde k|_{\tilde h}^2$ and $d\psi(\tilde \nu)^2$ vanish identically on $\Sigma_t$.  From equation \eqref{ktoktilde} we see that $k$ is identically zero on $\Sigma_t$.  Again we find ourselves in the case ($\tau  \equiv 0$) for which we have already proven the theorem.
\qed

\bibliographystyle{plain}
\bibliography{maxhyp}

% ------------------------------------------------------------------------
\end{document}